\documentclass[runningheads]{llncs}
\usepackage{graphicx}
\usepackage{amssymb,amsmath}
\usepackage{enumerate}
\usepackage{tikz}
\usepackage{caption}
\usepackage{algorithm}
\usepackage[noend]{algpseudocode}
\usepackage{lineno}

\newtheorem{observation}{Observation}

\DeclareMathOperator{\ov}{ov}
\DeclareMathOperator{\w}{w}

\DeclareMathOperator{\suff}{suff}
\DeclareMathOperator{\pref}{pref}
\newcommand{\s}{p}

\usetikzlibrary{calc,arrows,shapes,backgrounds,patterns,fit,decorations,decorations.pathmorphing}

\tikzstyle{hgedge}=[->,gray,line width=.4mm]
\tikzstyle{highhgedge}=[hgedge,yellow]
\tikzstyle{anypath}=[->,dashed]
\tikzstyle{vertex}=[draw,ellipse,inner sep=0.5mm]
\tikzstyle{inputvertex}=[draw,rectangle,inner sep=1mm]


\newenvironment{mypic}{\begin{center}\begin{tikzpicture}[line width=1.2pt,>=latex]}
		{\end{tikzpicture}\end{center}}

%

\begin{document}
	\title{Greedy Conjecture for the Shortest Common Superstring Problem and its Strengthenings}
	\titlerunning{Greedy Conjecture for SCS and its Strengthenings}
	%
	\author{Maksim S. Nikolaev\orcidID{0000-0003-4079-2885}}
	\authorrunning{M. S. Nikolaev}
	%
	\institute{Steklov Institute of Mathematics at St.~Petersburg, Russian Academy of Sciences
		\email{makc-nicko@yandex.ru}}
	\maketitle              
	\begin{abstract}
		In the~Shortest Common Superstring problem, one needs to~find the~shortest superstring for~a~set of~strings.
		This problem is APX-hard, and many approximation algorithms were proposed, with the~current best approximation factor of~2.466.
		Whereas these algorithms are technically involved, for more than thirty years the~Greedy Conjecture remains unsolved, that states that the~Greedy Algorithm ``take two strings with the~maximum overlap; merge them; repeat'' is a~2-approximation.
		
		This conjecture is still open, and one way to~approach it is to~consider its stronger version, which may make the~proof easier due to~the~stronger premise or provide insights from its refutation.
		In~this paper, we propose two directions to~strengthen the~conjecture.
		First, we introduce the Locally Greedy Algorithm (LGA), that selects a pair of strings not with the largest overlap but with the \emph{locally largest} overlap, that is, the largest among all pairs of strings with the same first or second string.
		Second, we change the~quality metric: instead of length, we evaluate the solution by~the~number of occurrences of~an~arbitrary symbol.
		
		Despite the double strengthening, we prove that~LGA is a~\emph{uniform} 4-approximation, that is, it always constructs a~superstring with no~more than four times as~many occurrences of an~arbitrary symbol as~any other superstring.		
		At the same time, we discover the limitations of the greedy heuristic: we show that~LGA is at least 3-approximation, and the Greedy Algorithm is at least uniform 2.5-approximation.
		These result show that if the Greedy Conjecture is true, it is not because the Greedy Algorithm is locally greedy or is uniformly 2-approximation.
		
		\keywords{superstring \and shortest common superstring \and approximation \and greedy algorithms \and greedy conjecture}
	\end{abstract}
	
	\section{Introduction}

In~the~Shortest Common Superstring problem (SCS) one is given a~set of~$n$ input strings $\mathcal{S} = \{ s_1, \dots, s_n \}$ and needs to~find the~shortest string that contains all of~them as~substrings.
SCS has applications in~genome assembly~\cite{waterman1995introduction,pevzner2001eulerian}: modern technologies can extract only relatively short sequences of~nucleotides, which leads to~the~problem of~reconstructing an~entire DNA from many such pieces. Without loss of generality, we can assume that no string of $\mathcal{S}$ is contained in any other.

SCS is NP-hard~\cite{GMS1980} and even APX-hard~\cite{BJLTY1991}.
The classic way to~get a~practical solution for~a~difficult optimization problem is to~use some kind of~heuristic.
Perhaps the~simplest heuristic for~SCS is~\emph{the greedy} one: to~find a~short superstring, merge strings with~the~longest \emph{overlap}, that~is, the~longest suffix of~one string that~is also a~prefix of~another.
An~algorithm that uses this heuristic is called \emph{the~Greedy Algorithm}~(GA).
It operates as~follows: while there are more than~two strings, choose a~pair with the~longest overlap and merge~it.
The~resulting string is clearly a~superstring for~the~input strings.

GA is not~deterministic, as~it may merge any pair with the~maximum overlap, and this can greatly affect the~length of the~result.
To~see this, consider a~dataset $\mathcal{S} = \{ \mathtt{ab}^n, \mathtt{b}^{n+1}, \mathtt{b}^n\mathtt{c} \}$, for which GA can construct both an~optimal solution $\mathtt{ab}^{n+1}\mathtt{c}$ and an~asymptotically twice as~long solution $\mathtt{ab}^n\mathtt{cb}^{n+1}$, depending on~the~tie-breaking rule.
This~example shows, that~the~approximation factor of GA is at~least two, and Tarhio and Ukkonen~\cite{TU1988} conjectured that it equals two for any tie-braking rule. This conjecture is known as~\emph{the~Greedy Conjecture}~(GC).

Over the past thirty-odd years, this conjecture has not been resolved.
This is strange, because usually the~analysis of such~simple greedy algorithms is also quite simple: either we understand that a~given algorithm can construct an~arbitrarily bad solution, or~we can obtain an~upper bound on~its~approximation factor and construct an~example for~which this bound turns~out to~be~tight.
The~first constant upper bound of~4 on~the~GA approximation factor was obtained by~Blum, Jiang, Li, Tromp, and Yannakakis~\cite{BJLTY1991}.
This~bound was improved to~3.5 by~Kaplan and Shafrir~\cite{KS2005}, and the~next two improvements: to~3.425 and to~3.396, respectively, were obtained recently by~Englert, Matsakis, and Vesel\'y~\cite{englert2022improved,englert2023simpler}.

Apart from that, the~following results related to~the~Greedy Algorithm are known:
\begin{itemize}
	\item GA is a factor $\frac12$-approximation for \emph{the maximum compression}~\cite{TU1988,T1989}, where compression is the difference between the total length of all input strings and the length of a~given superstring;
	\item GC holds for strings of length no more than 3~\cite{cazaux20143};
	\item GC holds for strings of length exactly 4~\cite{kulikov2015greedy};
	\item GC for the binary alphabet is equivalent to GC for the larger ones (see proof of Theorem 3 in~\cite{GMS1980});
	\item All instantiations of GA (that is, GA with specified tie-braking rule) achieve the same factor of approximation~\cite{nikolaev2021all}.
\end{itemize}
The Greedy Conjecture is challenging even for~rather simple special cases: for~instance, the~case of~strings of~length no~more than~4 is uncharted territory.

In~addition to~the~Greedy Algorithm, numerous approximation algorithms have been proposed (see Table~\ref{tb:approx}). The~current best upper bound of~2.466 on~the~approximation factor was recently obtained by Englert, Matsakis, and Vesel\'y~\cite{englert2023simpler}.
\begin{table}[h!]
	\centering
	\begin{tabular}{lll}
		$3.000$ & Blum, Jiang, Li, Tromp, Yannakakis~\cite{BJLTY1991} & 1991\\
		$2.889$ & Teng, Yao~\cite{TY1993} & 1993\\
		$2.834$ & Czumaj, Gasieniec, Piotrow, Rytter~\cite{CGPR1994} & 1994\\
		$2.794$ & Kosaraju, Park, Stein~\cite{KPS1994} & 1994\\
		$2.750$ & Armen, Stein~\cite{AS1994} & 1994\\
		$2.725$ & Armen, Stein~\cite{AS1995} & 1995\\
		$2.667$ & Armen, Stein~\cite{AS1996} & 1996\\
		$2.596$ & Breslauer, Jiang, Jiang~\cite{BJJ1997} & 1997\\
		$2.500$ & Sweedyk~\cite{S1999} & 1999\\
		$2.500$ & Kaplan, Lewenstein, Shafrir, Sviridenko~\cite{KLSS2005} & 2005\\
		$2.500$ & Paluch, Elbassioni, van~Zuylen~\cite{PEZ2012} & 2012\\
		$2.479$ & Mucha~\cite{M2013} & 2013\\
		$2.475$ & Englert, Matsakis, Vesel\'y~\cite{englert2022improved} & 2022\\
		$2.466$ & Englert, Matsakis, Vesel\'y~\cite{englert2023simpler} & 2023
	\end{tabular}
	\caption{Known upper bounds on approximation factors for SCS.}
	\label{tb:approx}
\end{table}

\subsection{Our Contribution}

We present \emph{the~Locally Greedy Algorithm}~(LGA), whose pseudocode is given in~Algorithm~\ref{algo:lga}.
\begin{algorithm}[h!]
	\caption{Locally Greedy Algorithm }
	\hspace*{\algorithmicindent} \textbf{Input:} set of strings~${\cal S}$.\\
	\hspace*{\algorithmicindent} \textbf{Output:} a~superstring~for $\mathcal{S}$.
	\begin{algorithmic}[1]
		\While{$\mathcal{S}$ contains at least two strings}
		\State extract from $\mathcal{S}$ any ordered pair $(s, t)$, whose overlap is the longest among pairs of the form $(s', t')$, where $s' = s$ or $t' = t$
		\State add to $\mathcal{S}$ the shortest string with a prefix $s$ and a suffix $t$
		\EndWhile
		\State return the only string from $\mathcal{S}$
	\end{algorithmic}
	\label{algo:lga}
\end{algorithm}

Clearly, GA is a~special case of LGA, and, as~GA, this algorithm makes kind~of~``greedy'' moves, but in~general they have little to~do with~the~length minimization.
As~an~example, consider a~set $\{ \mathtt{ab}^n, \mathtt{b}^n\mathtt{a} \}$. For~this~set, GA always constructs an~optimal superstring $\mathtt{ab}^n\mathtt{a}$, while LGA may well construct a~superstring $\mathtt{b}^n\mathtt{ab}^n$, which is asymptotically 2~times longer than the~optimal~one.
Although it seems implausible that~LGA is an~approximation algorithm at~all (and for~the~maximum compression it is indeed not: the example above shows that the total overlap of LGA can be arbitrarily less than the maximum), we modify the proof of~Blum et~al.~\cite{BJLTY1991} to~obtain
\[ |\mathrm{LGA}(\mathcal{S})| \leq 4\cdot|\mathrm{OPT}(\mathcal{S})|, \]
where $|s|$ is the length of a string $s$.

This result shows that to obtain a constant factor approximation for the length of the result, it is not necessary to focus on the reducing the length itself.
This means that LGA (and hence GA) can be a constant factor approximation not only for the shortest length, but also for other similar metrics that it does not directly minimize.
We show that it is the case for \emph{the number of occurrences of a given symbol}.
Let $|s|_p$ be the~number of~occurrences of a~symbol~$p$.
\begin{theorem}\label{theo:uni}
	For any symbol $p$ and a superstring $s$ for $\mathcal{S}$,
	\[ |\mathrm{LGA}(\mathcal{S})|_p \leq 4\cdot |s|_p. \]
\end{theorem}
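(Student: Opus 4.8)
The plan is to adapt the Blum--Jiang--Li--Tromp--Yannakakis argument that established the length bound $|\mathrm{LGA}(\mathcal{S})| \le 4\cdot|\mathrm{OPT}(\mathcal{S})|$, but to track occurrences of the fixed symbol $p$ instead of total length throughout. The key conceptual point is that the BJLTY-style analysis does not really use a global quantity; it compares, edge by edge or merge by merge, the cost incurred by the greedy run against the cost paid by an arbitrary reference superstring $s$. So my first step would be to recall the structure of that analysis: one sets up an overlap/merge accounting in which each greedy merge of a pair $(u,v)$ with overlap $\ov(u,v)$ ``charges'' the omitted overlap region to portions of an optimal (here, arbitrary) superstring, and a constant-factor loss of at most $4$ arises from the standard Monge/overlap inequalities governing how overlaps of distinct string pairs can interact. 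I want to reinterpret every length appearing in that argument as a count of $p$'s.

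\smallskip
\noindent\textbf{Reduction to a $p$-weighted length.} The cleanest way to do this, which I would try first, is to define for each string $w$ its \emph{$p$-weight} $\|w\| := |w|_p$, and observe that $\|\cdot\|$ is additive under concatenation and, crucially, compatible with overlaps: when LGA merges $u$ and $v$ into the shortest string $z$ with prefix $u$ and suffix $v$, the $p$-count satisfies $\|z\| = \|u\| + \|v\| - \|\ov(u,v)\|$, where $\|\ov(u,v)\|$ is the number of $p$'s in the overlapped region. Thus $p$-weight behaves exactly like ordinary length does under the merge operation, with ``length of overlap'' replaced by ``number of $p$'s in the overlap.'' If the BJLTY proof can be rephrased so that it only ever invokes (i) additivity of length under concatenation, (ii) the merge identity above, and (iii) the greedy/local-greedy selection rule (which picks the pair of maximum \emph{true} overlap length, not maximum $p$-count), then substituting $\|\cdot\|$ for $|\cdot|$ should carry the whole derivation through and yield $\|\mathrm{LGA}(\mathcal{S})\| \le 4\,\|s\|$ for any superstring $s$, which is exactly the claim.

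\smallskip
\noindent\textbf{Anticipated main obstacle.} The danger is precisely at point (iii): the algorithm selects pairs by the length of their overlap, but we are now bounding a different quantity (the $p$-count). In the original length analysis, the greedy inequality ``the chosen overlap is at least as long as any competing overlap sharing an endpoint'' is used to bound length; here we need the argument to survive even though a long overlap may contain few or no $p$'s, and a short overlap may be $p$-dense. So the real work is to check that wherever the BJLTY proof uses the greedy choice, it uses it to discard overlap \emph{regions} whose cost it then no longer has to pay --- and such a discarded region contributes $\|\cdot\| \ge 0$ to the reference side while contributing to the saved side monotonically. Concretely, I would verify that every overlap quantity that is \emph{lower-bounded} in the length proof (e.g.\ ``the greedy overlap is long, so we save a lot'') is used only in the direction ``we save a nonnegative amount of $p$'s,'' and every overlap that is \emph{upper-bounded} is bounded by a corresponding substring of the reference superstring $s$, whose $p$-count we are free to charge. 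Because $\|\ov\| \le \|u\|$ and $\|\ov\|\le \|v\|$ and all these are nonnegative integers, the Monge-type inequalities that power the factor-$4$ loss should degrade gracefully: the inequalities needed are of the form ``$p$-count of one region $\le$ $p$-count of a region containing it,'' which hold for substrings regardless of which metric selected the merge.

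\smallskip
\noindent If a term-by-term substitution turns out to break because the length proof compares overlap \emph{lengths} across different pairs (rather than comparing a region to a region containing it), my fallback plan is to run the comparison at the level of positions: fix an occurrence of $p$ in $\mathrm{LGA}(\mathcal{S})$, trace it back through the merge tree to the input string that originally carried it, and set up an injective or bounded-multiplicity charging from these occurrences to occurrences of $p$ in $s$, reusing the factor-$4$ overcounting bound from the length argument to cap the multiplicity at $4$. The hardest part in either route is confirming that the greedy selection rule, being oblivious to $p$, never forces the charging to exceed $4$; I expect this to reduce to exactly the same structural lemma about overlaps that underlies the length bound, now read as a statement about $p$-occurrences inside overlapping regions rather than about their lengths.
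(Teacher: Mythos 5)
Your plan coincides with the paper's own approach: the paper proves Theorem~\ref{theo:uni} by re-running a BJLTY-style analysis with every length replaced by a $p$-count. It does so by axiomatizing exactly what that analysis uses --- a ``pseudo-overlap graph'' with node weights $|s|_p$ and edge weights $|\ov(s,t)|_p$ satisfying nonnegativity, the triangle inequality, the Monge inequality, and a bound on overlaps between different cycles of the shortest cycle cover --- and showing that any dominance-respecting greedy path algorithm is a 4-approximation on such a graph. Your key compatibility observation (LGA selects by true overlap length, but competing overlaps sharing an endpoint are nested, so length-dominance implies $p$-count dominance) is also the paper's; it is what makes LGA a valid instantiation of the abstract algorithm on the $p$-weighted graph for every $p$ simultaneously.

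However, there is a genuine gap: your central claim that every inequality the analysis needs ``is of the form: $p$-count of a region $\le$ $p$-count of a region containing it'' is false for the two load-bearing lemmas, and you prove neither. First, the Monge inequality for $p$-counts: when $\ov(u',v)$ and $\ov(u,v')$ intersect inside $\ov(u,v)$, one writes $\ov(u,v)=xyz$, $\ov(u',v)=xy$, $\ov(u,v')=yz$ and needs the inclusion--exclusion identity $|xyz|_p=|xy|_p+|yz|_p-|y|_p$ together with the observation that $y$, being a suffix of $u'$ and a prefix of $v'$, is contained in $\ov(u',v')$, so that $|\ov(u',v')|_p\ge |y|_p$; this is a short but real string argument, not monotonicity. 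Second, and more seriously, the $p$-count analog of Lemma~\ref{lem:bound} (property \textbf{P4} in the paper): for $c_1,c_2$ lying on different cycles $C_1,C_2$ of the shortest cycle cover one needs $|\ov(c_1,c_2)|_p \le |C_1|_p+|C_2|_p$, where $|C_i|_p$ counts the occurrences of $p$ in a \emph{single period} of the cycle. The overlap is indeed a substring of $c_1$ (and of $c_2$), but $c_1$ itself may contain far more $p$'s than $|C_1|_p$, so no containment argument can possibly give this bound. The paper's proof requires a periodicity lemma (Lemma~\ref{lem:tech}): a string lying on a cycle $C$ is a prefix of $w^\infty$, where $w$ is the concatenation of the cycle's $\pref$-pieces, hence any window of width less than $|C|$ contains at most $|w|_p=|C|_p$ occurrences of $p$. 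Without this lemma the bad-back-edge accounting --- and with it the factor 4 --- does not survive the change of metric, and your positional-charging fallback is too vague to substitute for it.
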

We call this property a~\emph{uniform} factor 4~approximation.
Since $|s| = \sum_p |s|_p$, where the summation is taken over all the letters of the alphabet, the~uniform $\lambda$-approximation implies $\lambda$-approximation.

Theorem~\ref{theo:uni} may give the impression that the key to solving the Greedy Conjecture lies in the uniform approximation of GA, and we can try to prove it by showing that GA is a uniform 2-approximation. However, this is not the case.
\begin{theorem}\label{theo:lower}
	\emph{LGA} is at least 3-approximation. \emph{GA} is at least uniform 2.5-approximation.
\end{theorem}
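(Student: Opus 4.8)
The plan is to prove both lower bounds constructively: for each claim I would exhibit an explicit family of instances $\mathcal{S}_n$, parameterized by a growing integer $n$, together with a specific run of the algorithm, and show that the ratio of the produced quantity to the optimal one tends to the claimed constant as $n \to \infty$. In both cases the instances would use a single ``filler'' letter $\mathtt{b}$ repeated in long blocks $\mathtt{b}^n$ to create large overlaps, plus a constant number of fresh marker letters whose job is (i) to prevent any input string from being a substring of another and (ii) to let me pin down every pairwise overlap exactly. The proof of each statement then splits into three routine-but-delicate checks: that $\mathcal{S}_n$ is a legal instance (no containments), that the prescribed sequence of merges is actually permitted by the relevant greedy rule at every step, and that the optimum is as small as claimed.

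For the first claim I would push the factor-$2$ witness $\{\mathtt{ab}^n,\mathtt{b}^n\mathtt{a}\}$ further. Recall that on this set LGA may build $\mathtt{b}^n\mathtt{ab}^n$ by choosing the locally-largest overlap of size $1$ instead of the globally-largest overlap $\mathtt{b}^n$; the mechanism to exploit is that LGA is free to ``unfold'' a large overlap whenever the chosen pair is merely locally optimal for its own first or second string. I would arrange a small gadget (or a short chain of gadgets) so that the optimum uses all the large $\mathtt{b}^n$-overlaps while a valid locally-greedy run instead lays the blocks out side by side, driving the output toward three times the optimal length. The key design constraint is that after each unfolding merge the newly created string must still admit a locally maximal (overlap-$1$, say) merge with the next piece, so that the entire bad run stays legal; verifying this invariant, and confirming that the optimum genuinely exploits every large overlap, is where the work lies.

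For the second claim the metric changes to $|\cdot|_p$ for a designated letter $p$, and the algorithm is the stronger, globally greedy GA, so the witness must be more careful. Here I would place exactly one occurrence of $p$ in each input string and arrange the large $\mathtt{b}$-overlaps to lie on opposite sides of the $p$'s, so that every overlap GA prefers merges two strings without overlapping their $p$'s; consequently GA's output retains (nearly) all $p$-occurrences. Against this I would exhibit a different, \emph{length}-suboptimal superstring that instead uses the short, $p$-containing overlaps to coalesce the $p$'s, so that its $p$-count is about a $1/2.5$ fraction of GA's. Tuning the block lengths and the number of strings so that the ratio of the two $p$-counts approaches exactly $2.5$ gives the bound; since $|s|=\sum_p|s|_p$, this does not clash with any length bound, because the competitor superstring is cheap only in $p$, not in total length.

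The main obstacle in both parts is showing that the intended bad run is admissible at every step. For GA this is the stringent requirement: each prescribed merge must realize the global maximum overlap among all remaining pairs, and this must persist as strings are concatenated, so I must verify that introducing the steering $\mathtt{b}$-overlaps never accidentally creates a longer overlap elsewhere and never lets GA shortcut to the $p$-cheap arrangement. For LGA the analogous check is only local, but I must still confirm that the pair I want to merge is not dominated within its row or column. Maintaining these overlap-structure invariants while simultaneously keeping the optimum (respectively the $p$-optimal superstring) as small as the target ratio demands is the crux; once the invariants are in place, computing $|\mathrm{LGA}(\mathcal{S}_n)|$, $|\mathrm{GA}(\mathcal{S}_n)|_p$, and the optima, and then letting $n\to\infty$, is mechanical.
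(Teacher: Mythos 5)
There is a genuine gap: your proposal is a research plan, not a proof. For a lower-bound theorem the entire mathematical content is the explicit instance together with the verification that the bad greedy run is legal and that the competing superstring is as cheap as claimed; you correctly identify these obligations and even locate the crux (``maintaining these overlap-structure invariants \dots is where the work lies''), but you never discharge them --- no instance family is actually exhibited for either claim. For the LGA bound your sketch is at least close in spirit to the paper: the paper extends the classic factor-2 set to $\{ \mathtt{a}\mathtt{b}^n,\ \mathtt{b}^{n+1},\ \mathtt{b}^n\mathtt{c},\ \mathtt{b}^{n-1}\mathtt{c}^2 \}$, lets LGA first perform the genuinely greedy merge of $\mathtt{a}\mathtt{b}^n$ with $\mathtt{b}^n\mathtt{c}$, and then exploits the key mechanism that a pair all of whose competing pairs have empty overlap is locally maximal (in particular, the last two remaining strings may be merged in either order); this yields $\mathtt{b}^{n-1}\mathtt{c}^2\mathtt{a}\mathtt{b}^n\mathtt{c}\mathtt{b}^{n+1}$ of length $3n+4$ against the optimum $\mathtt{a}\mathtt{b}^{n+1}\mathtt{c}^2$ of length $n+4$. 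Without a concrete gadget your version cannot be checked, and the design is delicate: the paper notes that the natural further extension (adding $\mathtt{b}^{n-2}\mathtt{c}^3$) fails to push the ratio beyond 3, so a ``chain of gadgets'' is not something one gets for free.

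The gap is more serious for the GA bound. Your scheme --- exactly one occurrence of $p$ per input string, every GA-preferred overlap avoiding $p$-alignment so that GA retains (nearly) all $p$'s, and a competitor coalescing $p$'s at rate $2.5$ --- is not the paper's construction, and nothing in your outline forces the constant $2.5$: you give no mechanism that simultaneously keeps every $p$-aligning overlap dominated at every step of the run and lets one permutation coalesce $p$'s at exactly that rate, and the tension is real because Theorem~\ref{theo:uni} caps any such ratio at 4. The paper's witness is a fixed five-string set found by computer search, $\{ \mathtt{aaaab}, \mathtt{aaabaa}, \mathtt{aabaaba}, \mathtt{baabaa}, \mathtt{abaaaa} \}$: a legal GA run produces $\mathtt{baabaabaaabaaaab}$ with five $\mathtt{b}$'s, while $\mathtt{aaaabaabaaaa}$ has two, giving exactly $5/2$. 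Note that this realized example violates the invariant your plan is built around: the inputs carry seven $\mathtt{b}$'s in total (not one each), and GA's output does coalesce some of them. The paper moreover states it is unclear how to tweak this example even to ratio 3, underscoring that here the construction \emph{is} the theorem; until you exhibit your families and verify every merge, the statement remains unproved.
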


As with $\lambda$-approximation, bounds on the uniform approximation factor for one instantiation of GA imply the same bounds for all instantiations:
\begin{theorem}\label{theo:equiv}
	If some instantiation of \emph{GA} is a uniform $\lambda$-approximation, then all of them are uniform $\lambda$-approximations.
\end{theorem}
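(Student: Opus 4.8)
The plan is to reduce the uniform statement to the already-known equivalence of instantiations for the ordinary length factor (the result of~\cite{nikolaev2021all}) by re-running its construction while bookkeeping the single symbol $p$ in place of the total length. Observe first that being a uniform $\lambda$-approximation for $p$ is equivalent to the per-input inequality $|\mathrm{GA}(\mathcal{S})|_p \le \lambda\cdot\mathrm{OPT}_p(\mathcal{S})$, where $\mathrm{OPT}_p(\mathcal{S})$ denotes the minimum of $|t|_p$ over all superstrings $t$ of $\mathcal{S}$, since the defining condition quantifies over all superstrings $s$ and the minimizing one is binding. Hence it suffices to take the instantiation $A$ assumed to be uniform $\lambda$-approximation, an arbitrary instantiation $B$, an arbitrary input $\mathcal{S}$, and to produce a transformed input $\mathcal{S}'$ with two properties: \emph{(i)} $A$ run on $\mathcal{S}'$ replays the merges that $B$ performs on $\mathcal{S}$, so that $|A(\mathcal{S}')|_p = |B(\mathcal{S})|_p$; and \emph{(ii)} $\mathrm{OPT}_p(\mathcal{S}') \le \mathrm{OPT}_p(\mathcal{S})$. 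Combining these gives
\[ |B(\mathcal{S})|_p = |A(\mathcal{S}')|_p \le \lambda\cdot\mathrm{OPT}_p(\mathcal{S}') \le \lambda\cdot\mathrm{OPT}_p(\mathcal{S}), \]
and, as $B$ and $\mathcal{S}$ are arbitrary and the roles of the instantiations are symmetric, every instantiation is a uniform $\lambda$-approximation.

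To obtain \emph{(i)}, I would recall the transformation underlying the length-factor equivalence: given the (deterministic) execution of $B$, one perturbs $\mathcal{S}$ with fresh symbols so that at every step the pair $B$ merges has the strictly longest overlap among the strings currently present. This makes the entire merge order deterministic, hence independent of the tie-breaking rule, so the fixed instantiation $A$ is forced to follow exactly $B$'s sequence of merges on $\mathcal{S}'$. The new ingredient for the uniform setting is the simple observation that all symbols introduced by this perturbation may be chosen distinct from $p$: only finitely many fresh symbols are ever needed and the alphabet is unbounded. Consequently the perturbation changes the number of $p$'s in no string, and the string $A$ assembles on $\mathcal{S}'$ carries exactly as many $p$'s as the one $B$ assembles on $\mathcal{S}$, so that \emph{(i)} holds as an exact equality rather than the asymptotic length estimate needed in the ordinary argument.

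For \emph{(ii)}, I would try to lift an optimal (for $p$) superstring $s$ of $\mathcal{S}$ to a superstring $s'$ of $\mathcal{S}'$ by inserting the same fresh, $p$-free symbols into $s$ at the positions matching their insertion into the individual strings. If such a consistent insertion exists, then $s'$ contains every string of $\mathcal{S}'$ as a substring and satisfies $|s'|_p = |s|_p$, whence $\mathrm{OPT}_p(\mathcal{S}') \le |s'|_p = |s|_p = \mathrm{OPT}_p(\mathcal{S})$, as required. Note that the reverse inequality $\mathrm{OPT}_p(\mathcal{S}) \le \mathrm{OPT}_p(\mathcal{S}')$ is immediate — deleting the fresh symbols from any superstring of $\mathcal{S}'$ is a concatenation-preserving map that yields a superstring of $\mathcal{S}$ with the same number of $p$'s — so what is actually at stake in \emph{(ii)} is the existence of the $p$-preserving lift.

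The main obstacle is exactly this lift. A crude perturbation that appends unique markers to whole strings would trivially force the merge order, but it would also destroy the overlaps exploited by the optimal packing and thereby inflate $\mathrm{OPT}_p(\mathcal{S}')$, breaking \emph{(ii)}; this tension is already the heart of the length argument. The delicate requirement is to force the merge order while keeping the $p$-bearing parts of the strings overlapping exactly as they do in $\mathcal{S}$, so that the fresh symbols are inserted only \emph{outside} the overlaps used by some optimal superstring and a consistent insertion into $s$ is available. I therefore expect the bulk of the work to be verifying that the construction of~\cite{nikolaev2021all} (or a suitable variant) can be carried out with all introduced symbols different from $p$ and admits such a $p$-preserving lift of the optimum; once that is in place, the chain of inequalities above closes the proof.
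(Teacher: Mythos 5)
Your plan coincides in outline with the paper's proof: both run the construction of~\cite{nikolaev2021all} with sentinel symbols distinct from $p$, force the uniform-$\lambda$ instantiation $A$ to replay $B$'s non-trivial merges on a perturbed instance $\mathcal{S}'$, and close with the chain $|B(\mathcal{S})|_p = |A(\mathcal{S}')|_p \le \lambda\cdot\mathrm{OPT}_p(\mathcal{S}') \le \lambda\cdot\mathrm{OPT}_p(\mathcal{S})$ (the paper phrases this as a contradiction, which is equivalent). So the approach is the right one, and you correctly located the crux.

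However, the step you leave open --- the existence of a ``$p$-preserving lift'' of the optimal superstring --- is exactly the content that still has to be supplied, and your concern about fresh symbols landing inside the overlaps exploited by the optimum dissolves once one recalls the \emph{explicit form} of the construction, which your proposal never pins down. In~\cite{nikolaev2021all} the fresh symbols are not appended as markers; each input string $s_i = c_1 c_2 \dots c_{|s_i|}$ is \emph{diluted} into $s'_i = \$^{m-\alpha_i} c_1 \$^m c_2 \$^m \cdots \$^m c_{|s_i|} \$^{\beta_i}$ with $\alpha_i, \beta_i < m$. This interleaving preserves not only $|s'_i|_p = |s_i|_p$ but also $|\ov(s'_i, s'_j)|_p = |\ov(s_i, s_j)|_p$ for every ordered pair, and hence $|s'(\pi)|_p = |s(\pi)|_p$ for \emph{every} permutation $\pi$, not just the greedy one. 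This identity is strictly stronger than the lift you were trying to engineer: taking $\pi$ to be an optimal permutation for $\mathcal{S}$ immediately gives $\mathrm{OPT}_p(\mathcal{S}') \le |s'(\pi_{OPT})|_p = |s(\pi_{OPT})|_p$, with no positional analysis of where sentinels fall relative to the optimal overlaps. Two residual details are also worth making explicit: first, comparing against permutation-induced superstrings suffices, since any superstring has at least as many occurrences of $p$ as the merged superstring of the permutation induced by the order of occurrences of the input strings in it; second, the construction only forces the merge order up to the first empty-overlap merge, which is enough because concatenations do not change the total count of $p$'s, so $|B(\mathcal{S})|_p = \sum_i |t_i|_p$ over the pieces $t_i$ present at that moment, and likewise for $A$ on $\mathcal{S}'$.
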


The second result of Theorem~\ref{theo:lower} seems particularly interesting: we do not know if GA is a 2-approximation, but we do know for sure that it is not a uniform 2-approximation.
This means that if GA is indeed a 2-approximation, then it cannot be inefficient in terms of the number of occurrences for too many different symbols at once.
Why does this happen, and how can it be controlled?
On the other hand, this same result may mean that GA is not a 2-approximation, and the counterexample from Theorem~\ref{theo:lower} can serve as a starting point for constructing a counterexample to GC.
Finally, although a uniform approximation cannot be used to prove the 2-approximation of GA, it can potentially be used to prove the 2.5- or 3-approximation (if GA is indeed a 2.5 or 3-approximation, of course).

\subsection{Structure of the Paper}

In~Section~\ref{sec:defs} we introduce the~necessary notation and definitions.
In~Section~\ref{sec:uniform} we prove Theorem~\ref{theo:uni}.
In~Section~\ref{sec:lower} we prove Theorem~\ref{theo:lower}.
In~Section~\ref{sec:equiv} we prove Theorem~\ref{theo:equiv}.
Finally, in~Appendix~\ref{sec:proof} we prove the~general result, which implies Theorem~\ref{theo:uni} as a~special case.

\section{Preliminaries}\label{sec:defs}

For non-empty strings $s$ and $t$, by $\ov(s, t)$ we denote their \emph{overlap}, that is, the longest string $y$, such that $s = xy$ and $t = yz$ for some \emph{non-empty} strings $x$ and $z$, which we denote by $\pref(s, t)$ and $\suff(s, t)$, respectively.
A string $xyz = \pref(s, t) \ov(s, t) \suff(s, t)$ is~\emph{a merge} of~$s$ and~$t$ (see Fig.~\ref{fig:overlap}).
By $d(s, t) = |\pref(s, t)|$ we denote \emph{the distance} between $s$ and $t$.
By~$\varepsilon$ we denote the empty string.

\begin{figure}[ht]
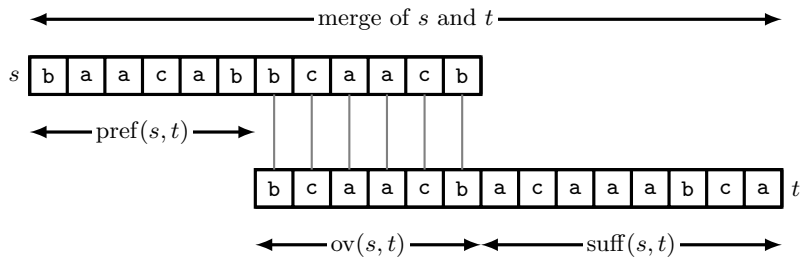

	\begin{mypic}
		\begin{scope}
			
			\draw (0,2) rectangle (6,2.5);
			\draw[step=5mm] (0,2) grid (6,2.5);
			\node[left] at (0,2.25) {$s$};
			\draw (3,0.5) rectangle (10,1);
			\draw[step=5mm] (3,0.5) grid (10,1);
			\node[right] at (10,0.75) {$t$};
			
			\foreach \x in {3.25, 3.75, ..., 5.75}
			\draw[gray,thick] (\x,2) -- (\x,1);
			
			\foreach \f/\t/\y/\lab in {0/3/1.5/{\pref(s,t)}, 
				3/6/0/{\ov(s,t)}, 6/10/0/{\suff(s,t)}, 0/10/3/{\text{merge of $s$ and $t$}}}
			\path (\f,\y) edge[<->] node[rectangle,inner sep=0.5mm,fill=white] {\strut $\lab$} (\t,\y);
			
			\foreach \x/\a in {0/b, 0.5/a, 1/a, 1.5/c, 2/a, 2.5/b, 3/b, 3.5/c, 4/a, 4.5/a, 5/c, 5.5/b}
			\node at (\x+0.25,2.25) {\tt \a};
			\foreach \x/\a in {3/b, 3.5/c, 4/a, 4.5/a, 5/c, 5.5/b, 6/a, 6.5/c, 7/a, 7.5/a, 8/a, 8.5/b, 9/c, 9.5/a}
			\node at (\x+0.25,0.75) {\tt \a};
		\end{scope}
	\end{mypic}
	\caption{Pictorial explanations of $\pref$, $\suff$, and $\ov$ functions.}
	\label{fig:overlap}
\end{figure}

Throughout the paper by $\mathcal{S} = \{ s_1, \dots, s_n \}$ we denote the set of $n$ input strings. We assume that no input string is a substring of another, since such strings may be found and removed efficiently. In this case, SCS becomes \emph{a permutation problem}: to find the shortest superstring, it is sufficient to find a permutation $(s_{\pi(1)}, \dots, s_{\pi(n)})$ that gives the shortest string after merging the adjacent strings. For a given permutation of indices $\pi$, the length of the corresponding superstring $s(\pi)$ is equal to
\begin{multline*}
	|s_{\pi(1)}| + |\suff(s_{\pi(1)}, s_{\pi(2)})| + \dots + |\suff(s_{\pi(n-1)}, s_{\pi(n)})| = \\
	= |\pref(s_{\pi(1)}, s_{\pi(2)})| + \dots + |\pref(s_{\pi(n-1)}, s_{\pi(n)})| + |s_{\pi(n)}| = \\
	= \sum_i |s_i| - \sum_{i = 1}^{n - 1} |\ov(s_{\pi(i)}, s_{\pi(i+1)})|.
\end{multline*}
The last expression shows that finding the shortest superstring is equivalent to maximizing \emph{the compression}, which is the sum of the overlaps of adjacent strings in a permutation. This naturally reduces SCS to the Longest Hamiltonian Path problem in \emph{an overlap graph} $OG(\mathcal{S})$, a complete directed graph $(V, E)$ with self-loops, where $V = \mathcal{S}$, $E = \mathcal{S} \times \mathcal{S}$ and the length of a directed edge $(s, t) \in E$ is $|\ov(s, t)|$ (see Figure~\ref{fig:og}).

The overlap graph has some crucial properties, which are widely used. One of them is \emph{the triangle inequality}:
\begin{gather}\label{eq:tri}
	d(u, w) \leq d(u, v) + d(v, w).
\end{gather}

Another one is \emph{Monge inequality}:
\begin{lemma}[Lemma 3.1 from~\cite{TU1988}]\label{lem:monge}
	If $|\ov(s, t)| \geq \max \{ |\ov(s, t')|, |\ov(s', t)| \}$, then $|\ov(s, t)| + |\ov(s', t')| \geq |\ov(s, t')| + |\ov(s', t)|$.
\end{lemma}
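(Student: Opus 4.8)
The plan is to prove the inequality by a direct combinatorial argument on suffixes and prefixes, locating all four relevant overlaps as substrings of a single common string. Write $a = \ov(s,t)$, $b = \ov(s',t')$, $c = \ov(s,t')$, and $d = \ov(s',t)$, so that the hypothesis reads $|a| \ge |c|$ and $|a| \ge |d|$, and the goal is $|a| + |b| \ge |c| + |d|$.

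First I would dispose of the trivial case $|c| + |d| \le |a|$, where the conclusion is immediate from $|b| \ge 0$. So assume $|c| + |d| > |a|$. It then suffices to exhibit a single string that is simultaneously a suffix of $s'$ and a prefix of $t'$ and has length at least $|c| + |d| - |a|$, since this forces $|b| \ge |c| + |d| - |a|$ and hence $|a| + |b| \ge |c| + |d|$.

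The key observation is that both $c$ and $d$ can be located inside $a$. Indeed, $a$ and $c$ are both suffixes of $s$ with $|c| \le |a|$, so $c$ is a suffix of $a$; likewise $a$ and $d$ are both prefixes of $t$ with $|d| \le |a|$, so $d$ is a prefix of $a$. Placing $a$ on the positions $1, \dots, |a|$, the prefix $d$ occupies positions $1, \dots, |d|$ and the suffix $c$ occupies positions $|a| - |c| + 1, \dots, |a|$. Because $|c| + |d| > |a|$, these two blocks overlap in the window of positions $|a| - |c| + 1, \dots, |d|$, which is a string $r$ of length exactly $|c| + |d| - |a|$; by construction $r$ is a suffix of $d$ and a prefix of $c$.

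Finally I would transport $r$ back to $s'$ and $t'$: since $d = \ov(s',t)$ is a suffix of $s'$, its suffix $r$ is also a suffix of $s'$, and since $c = \ov(s,t')$ is a prefix of $t'$, its prefix $r$ is also a prefix of $t'$. Thus $r$ is a common suffix-of-$s'$ and prefix-of-$t'$, giving $|b| \ge |r| = |c| + |d| - |a|$, as required. I expect the main obstacle to be purely bookkeeping: checking that the hypothesis forces exactly the containments $c \sqsubseteq_{\mathrm{suff}} a$ and $d \sqsubseteq_{\mathrm{pref}} a$ (rather than the reverse), and verifying the non-emptiness requirements in the definition of $\ov$ so that $r$ is a legitimate candidate for $\ov(s',t')$; the latter holds because $r$ inherits properness from $d$ and $c$, and $r$ is non-empty precisely in the non-trivial case $|c| + |d| > |a|$.
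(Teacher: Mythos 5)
Your proof is correct. The paper itself does not prove this lemma (it is cited from Tarhio and Ukkonen), but your argument --- place $c=\ov(s,t')$ as a suffix and $d=\ov(s',t)$ as a prefix of $a=\ov(s,t)$, take the window $r$ of length $|c|+|d|-|a|$ where they overlap, and transport $r$ to a common suffix of $s'$ and prefix of $t'$ --- is exactly the standard one, and it is the same decomposition this paper uses for the occurrence-counting analogue, property \textbf{P3} in the proof of Theorem~\ref{theo:pseudo} (there written as $\ov(u,v)=xyz$, $\ov(u',v)=xy$, $\ov(u,v')=yz$, with the middle string $y$ a suffix of $u'$ and a prefix of $v'$), so no comparison issues arise.
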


\begin{figure}
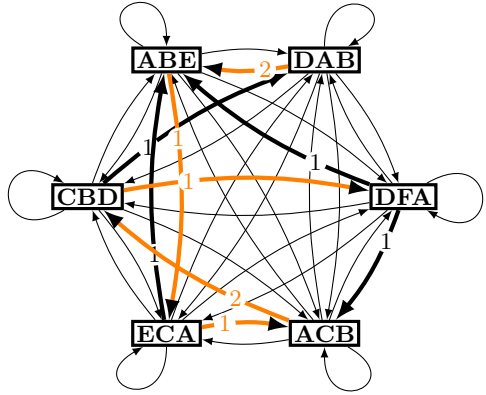

	\centering
	\begin{mypic}
		\begin{scope}[scale=0.7]
			\useasboundingbox (0,-1) rectangle (12,7);
			
			\begin{scope}[xshift=6cm,yshift=3cm]
				\foreach \n/\a in {DFA/0, DAB/60, ABE/120, CBD/180, ECA/240, ACB/300}
				\node [draw,rectangle,inner sep=.5mm] (\n) at (\a:30mm) {\bf \n};
			\end{scope}
			
			\foreach \n/\a in {DFA/0, DAB/60, ABE/120, CBD/180, ECA/240, ACB/300}
			\draw [->,line width=0.4pt] (\n) to [out=\a+30,in=\a-30,looseness=8] (\n);
			
			\foreach \s/\t/\w in {ACB/DAB/0,  ACB/ECA/0,  ACB/DFA/0,  ACB/ABE/0,  CBD/ACB/0,  CBD/ECA/0,  CBD/ABE/0,  DAB/ACB/0,  DAB/CBD/0,  DAB/ECA/0,  DAB/DFA/0, ECA/CBD/0,  ECA/DAB/0,  ECA/DFA/0,  DFA/CBD/0,  DFA/DAB/0,  DFA/ECA/0, ABE/ACB/0,  ABE/CBD/0,  ABE/DAB/0,  ABE/DFA/0
			}
			\path [->,draw=black,line width=0.4pt] (\s) edge[bend left=10] (\t);
			
			\foreach \s/\t/\w in {CBD/DAB/1, ECA/ABE/1,  DFA/ACB/1, DFA/ABE/1}
			\path [->,draw=black,line width=1.6pt] (\s) edge[bend left=10] node[near start,fill=white,circle,inner sep=0.1mm] {\small \textcolor{black}{\w}} (\t);
			
			\foreach \s/\t/\w in {DAB/ABE/2, ABE/ECA/1, ECA/ACB/1, ACB/CBD/2, CBD/DFA/1}
			\path [->,line width=1.6pt] (\s) edge[draw=orange,bend left=10] node[near start,fill=white,circle,inner sep=0.1mm] {\small \textcolor{orange}{\w}} (\t);
		\end{scope}
	\end{mypic}
	\caption{An overlap graph for $\mathcal{S} = \{ \mathtt{ABE}, \mathtt{DAB}, \mathtt{DFA}, \mathtt{ACB}, \mathtt{ECA}, \mathtt{CBD} \}$. Thin edges have zero weight. Orange path is a~Hamiltonian path of the~maximum weight 7, corresponding to the~shortest superstring \texttt{DABECACBDFA} of length $6\cdot3-7 = 11$.}
	\label{fig:og}
\end{figure}

For a cycle $C = v_1 \to \dots \to v_k \to v_1$ in the overlap graph, by $|C|$ we denote its \emph{length}:
\begin{gather}\label{eq:length}
	|C| := \sum_i (|v_i| - |\ov(v_i, v_{(i+1) \, \mathrm{mod}\, k})|) = \sum_i |\pref(v_i, v_{(i+1) \, \mathrm{mod}\, k})|.
\end{gather}

A set of edges is \emph{a cycle cover} if both the in-degree and the out-degree of any node is equal to 1.
This set is a collection of nonintersecting simple loops $C_1, \dots, C_k$.
\emph{The maximum weight cycle cover} in the overlap graph is a cycle cover $CC$ with the maximum total weight of its edges.
Note, that the maximum weight cycle cover is also \emph{the shortest} one, since $\sum_i |C_i| = \sum_i |s_i| - \sum_{(s, t) \in CC} |\ov(s, t)|$ and $\sum_i |s_i|$ is the same for all cycle covers.

Another property of the overlap graph is that nodes from the different cycles in the shortest cycle cover have bounded overlap.
\begin{lemma}[Lemma 9 from~\cite{BJLTY1991}]\label{lem:bound}
	Let $C_1$ and $C_2$ be two cycles in the shortest cycle cover with $c_1 \in C_1$ and $c_2 \in C_2$. Then
	\[ |\ov(c_1, c_2)| < |C_1| + |C_2|. \]
\end{lemma}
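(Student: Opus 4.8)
The plan is to play the periodic structure of the cycles against the optimality (minimum total length) of the cycle cover: a very long overlap would force the two cycles to share a short common period, which would let me re-thread all their vertices into a single shorter cycle, contradicting minimality.

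\emph{Periodicity of a single cycle.} First I would record that every vertex of a cycle $C = v_1 \to \dots \to v_k \to v_1$ has period $|C|$. Writing $w = \pref(v_1,v_2)\pref(v_2,v_3)\cdots\pref(v_k,v_1)$, one has $|w| = |C|$ by~\eqref{eq:length}, and I claim $v_1$ is a prefix of the infinite word $w^\infty$. To see this, unroll the cycle: the merge along $v_1 \to \dots \to v_k \to v_1$ equals the string $w\,v_1$, which begins with $v_1$ (merging preserves the first string as a prefix) and ends with $v_1$ (it is the last string). Matching the two occurrences of $v_1$, one at offset $0$ and one at offset $|C|$, gives $v_1[i] = v_1[i+|C|]$ wherever both are defined, i.e. $v_1$ has period $|C|$; comparing the shared prefix of $v_1$ and $w$ upgrades this to ``$v_1$ is a prefix of $w^\infty$''. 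Rotating the base point yields the same statement for every vertex of $C$.

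\emph{Forcing a common short period.} Assume toward a contradiction that $|\ov(c_1,c_2)| \ge |C_1| + |C_2|$ and set $o := \ov(c_1,c_2)$. As a suffix of $c_1$ it inherits period $|C_1|$, and as a prefix of $c_2$ it has period $|C_2|$. Since $|o| \ge |C_1| + |C_2| \ge |C_1| + |C_2| - \gcd(|C_1|,|C_2|)$, the Fine--Wilf periodicity lemma shows that $o$ has period $g := \gcd(|C_1|,|C_2|)$. Because $|o| \ge |C_1|$, the factor of $o$ aligned with one full turn of $C_1$ has length $|C_1|$ and, having period $g$ with $g \mid |C_1|$, equals $z^{|C_1|/g}$ for the length-$g$ word $z$ read off $o$; the analogous statement holds for $C_2$ with the same cyclic word $z$, since the two windows overlap inside $o$ and this fixes their phase. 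Hence $w_1$ and $w_2$ are powers of rotations of one primitive word of length dividing $g$, every vertex of $C_1$ and of $C_2$ has period $g$, and, crucially, all of them are factors of a single periodic word, mutually aligned through $o$.

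\emph{Re-threading into a shorter cycle, and the obstacle.} Each vertex of $V(C_1)\cup V(C_2)$ is now determined by a phase in $\mathbb{Z}/g\mathbb{Z}$ together with its length, and no two can share a phase: two equal-phase period-$g$ strings would be nested, one a substring of the other, which is excluded by our standing assumption on $\mathcal{S}$. Thus the phases are pairwise distinct, and ordering the vertices by increasing phase around a circle of circumference $g$ chains them into a single cycle $C^\ast$ covering exactly $V(C_1)\cup V(C_2)$. Since the real overlaps are at least those induced by this common alignment, $|C^\ast|$ is at most one full turn, namely $g \le \min(|C_1|,|C_2|) < |C_1| + |C_2|$. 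Replacing $C_1$ and $C_2$ by $C^\ast$ leaves every other cycle untouched and strictly decreases the total length, contradicting the minimality of the shortest cycle cover; hence $|\ov(c_1,c_2)| < |C_1| + |C_2|$. The unrolling of the first step and the final counting are routine; I expect the real difficulty to lie in the second step, namely applying Fine--Wilf with the correct length hypothesis and then carefully propagating the period $g$ out of the window $o$ onto the full period words $w_1, w_2$, verifying that the two cyclic words coincide and are phase-compatible, and confirming that equal phase modulo $g$ genuinely forces a substring relation so that $C^\ast$ is a cycle on distinct vertices.
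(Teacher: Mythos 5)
Your argument is correct: establishing that every string of a cycle $C$ is a prefix of $w^\infty$ with $|w|=|C|$, applying Fine--Wilf to the overlap (whose assumed length $\geq |C_1|+|C_2|$ exceeds $|C_1|+|C_2|-\gcd(|C_1|,|C_2|)$), and re-threading all vertices of both cycles, which now live phase-aligned in one $\gcd$-periodic word with pairwise distinct phases by the no-substring assumption, into a single cycle of length at most $\gcd(|C_1|,|C_2|) < |C_1|+|C_2|$, contradicting minimality of the cycle cover, is exactly the standard route. Note that the paper itself does not prove this lemma at all --- it imports it as Lemma~9 of~\cite{BJLTY1991} --- and your proof is essentially the original one from that reference, so the two approaches coincide.
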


\section{Uniform 4-approximation of the Locally Greedy Algorithm}\label{sec:uniform}

\subsection{Pseudo-overlap Graph}

In this section, we introduce a general result that implies Theorem~\ref{theo:uni} as a special case.
Our approach is reminiscent to the one of Weinard and Schnitger~\cite{weinard2006greedy} and Laube and Weinard~\cite{laube2005conditional}: they considered some properties of the overlap graph (namely, the triangle inequality, Monge inequality and the triple inequality) and proved that they are insufficient to prove the Greedy Conjecture, by providing an example of a graph with all these properties, but on which the Greedy Algorithm constructs a superstring that is too long.
We also consider some properties of the overlap graph, but instead of proving insufficiency to achieve a factor 2 approximation, we prove sufficiency of these properties to achieve a factor 4 approximation.

Let $\mathcal{G} = (\mathcal{V}, \mathcal{V} \times \mathcal{V}, |\cdot|, \w)$, where $\mathcal{V} = \{ v_1, \dots, v_n \}$, be a complete directed graph with self-loops and non-negative weights on nodes and edges: weight of a node $v$ is $|v|$, weight of an edge $(u, v)$ is $\w(u, v)$. For a~subgraph $G$ of $\mathcal{G}$ (that is, $G$ is a pair $(V, E)$, where $V \subset \mathcal{V}$ and $E \subset V \times V$) define \emph{a weight} $\w(G) = \sum_{(u, v) \in E} \w(u, v)$ and \emph{a length} $\|G\| := \sum_{v \in V} |v| - \w(G)$. For example, the weight of the path consisting of one edge $u \to v$ is $\w(u, v)$ and its length is $|u| - \w(u, v) + |v|$. Note that $\|u \to v \to w\| \neq \| u \to v \| + \| v \to w\|$, since $|v|$ is added once on the left side and twice on the right.

As in the previous section, we may define \emph{the maximum weight cycle cover}, which is a cycle cover $C$ with maximum $\w(C)$. Note, that $C$ is also \emph{the shortest cycle cover} with respect to the length $\| \cdot \|$, since $\|C\| = \sum_{v \in \mathcal{V}} |v| - \w(C)$ and $\sum_{v \in \mathcal{V}} |v|$ is the same for all cycle covers. Since $\w \geq 0$, $C$ is also no longer than any Hamiltonian path in $\mathcal{G}$.

Let us call $\mathcal{G}$ \emph{a pseudo-overlap graph} if it satisfies the following four properties:
\begin{description}
	\item[\bf P1.]  $|u| \geq \w(u, v)$ and $|u| \geq \w(v, u)$ for all $u, v \in \mathcal{V}$.
	\item[\bf P2.] Triangle inequality: $|u| - \w(u, v) + |v| - \w(v, w) \geq |u| - \w(u, w)$ for all $u, v, w \in \mathcal{V}$.
	\item[\bf P3.] Monge inequality: if $\w(u, v) \geq \max\{ \w(u, v'), \w(u', v) \}$, then
	\[ \w(u, v) + \w(u', v') \geq \w(u, v') + \w(u', w). \]
	\item[\bf P4.] If $C_1$ and $C_2$ are two cycles in the maximum weight cycle cover on $\mathcal{G}$ with $c_1 \in C_1$ and $c_2 \in C_2$, then $\w(c_1, c_2) \leq \| C_1 \| + \| C_2 \|$.
\end{description}
Note that \textbf{P1} ensures that the length of any path or cycle is nonnegative, and \textbf{P2} ensures that a path $u \to v \to w$ is at least as long as $u \to w$.
This means, in particular, that the shortest cycle cover on a complete subgraph $G = (V, V\times V)$ of $\mathcal{G}$ is no longer than the shortest cycle cover on $\mathcal{G}$.

Let us say that an edge $(u, v)$ \emph{dominates} (\emph{strictly dominates}) another edge $(u', v')$, if $u' = u$ or $v' = v$ and $\w(u, v) \geq \w(u', v')$ ($\w(u, v) > \w(u', v')$).  Call $(u, v)$ \emph{dominant}, if it dominates every edge, that begins in $u$ or ends in $v$. Recall that $G$ includes self-loops, so $u$ may well be $v$, $u'$, $v'$, and so on.

Consider an algorithm PATH that goes through a list of all edges in $G$ in \emph{a dominance respecting order}, that is, if $(u, v)$ strictly dominates $(u', v')$, then $(u, v)$ will be processed earlier, and includes some of them in the final solution. Specifically, PATH \emph{does not} include another edge $(u, v)$ if and only if
\begin{description}
	\item[\bf R1.] it is dominated by an already chosen edge,
	\item[\bf R2.] it is not dominated but it would form a cycle.
\end{description}

The resulting set of chosen edges forms a Hamiltonian path $\mathrm{PATH}(\mathcal{G})$.

\begin{theorem}\label{theo:path}
	Let $SHP$ be the shortest Hamiltonian path in $\mathcal{G}$. Then $\|\mathrm{PATH}(\mathcal{G})\| \leq 4\cdot\|SHP\|$.
\end{theorem}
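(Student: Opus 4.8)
The plan is to carry over the cycle-cover argument of Blum et al.\ to the abstract graph $\mathcal{G}$, substituting each use of a concrete string property by one of \textbf{P1}--\textbf{P4}, and to compare $H := \mathrm{PATH}(\mathcal{G})$ with the maximum weight cycle cover $C = \{C_1, \dots, C_m\}$ rather than with $SHP$ directly. The excerpt already supplies the lower bound we need on the cheap side: $C$ is the shortest cycle cover and is no longer than any Hamiltonian path, so in particular $\|C\| \le \|SHP\|$. The whole theorem then reduces to two inequalities: a bound on the total length of a system of cycle representatives, and a structural bound comparing $\|H\|$ to those representatives and to $\|C\|$. Concretely, I would fix one representative node $r_i$ in each cycle $C_i$ (the choice being dictated by the structural step below) and prove
\[ \sum_i |r_i| \le \|SHP\| + 2\|C\| \qquad\text{and}\qquad \|H\| \le \sum_i |r_i| + \|C\|, \]
after which $\|H\| \le \|SHP\| + 3\|C\| \le 4\|SHP\|$ is immediate.

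The first inequality is the easier one and isolates the roles of \textbf{P2} and \textbf{P4}. List the representatives in the order $r_{\sigma(1)}, \dots, r_{\sigma(m)}$ in which they occur along $SHP$. Deleting from $SHP$ every node other than the representatives cannot increase its length: each deletion replaces a subpath $u \to x \to w$ by the edge $u \to w$, and \textbf{P2} guarantees $\|u \to x \to w\| \ge \|u \to w\|$. Hence the induced path on the representatives has length at most $\|SHP\|$, that is, $\sum_i |r_i| - \sum_{k=1}^{m-1} \w(r_{\sigma(k)}, r_{\sigma(k+1)}) \le \|SHP\|$. Since consecutive representatives lie in distinct cycles, \textbf{P4} bounds each $\w(r_{\sigma(k)}, r_{\sigma(k+1)})$ by $\|C_{\sigma(k)}\| + \|C_{\sigma(k+1)}\|$, and these telescope to at most $2\sum_i \|C_i\| = 2\|C\|$. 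Rearranging yields the claimed bound on $\sum_i |r_i|$.

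The second inequality is the heart of the matter. Passing to weights, it is equivalent to $\w(C) \le \w(H) + \sum_i |r_i|$, i.e.\ the cycle cover collects at most $\sum_i |r_i|$ more overlap than the path built by PATH. Here I would exploit the dominance-respecting processing order together with Monge's inequality \textbf{P3}. The intuition is that PATH discards an edge only under rule \textbf{R1} (dominated) or rule \textbf{R2} (would close a cycle); a discard of the first kind should be weight-neutral in the exchange sense, since the dominating edge already chosen is at least as heavy and \textbf{P3} lets one reroute the cover through it without loss, whereas a discard of the second kind is exactly an ``opening'' of some cycle of $C$. One then charges each such cycle-opening to the representative $r_i$ of the cycle it breaks --- the uncollected closing overlap has weight at most $|r_i|$ by \textbf{P1} --- and argues that each cycle is opened at most once in this accounting.

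I expect this last step to be the main obstacle. The delicate points are (i) that PATH is only \emph{locally} greedy, so unlike in the classical argument one cannot appeal to a global maximum-overlap choice and must instead drive the Monge exchange edge by edge along the dominance order, maintaining an invariant that the rerouting of the partial cover stays valid; and (ii) bounding the cumulative weight lost to \textbf{R2}-rejections by $\sum_i |r_i|$, which requires showing that the cycles of $C$ are opened essentially once each rather than repeatedly as PATH's path weaves between them. Making the charging of \textbf{R2}-rejections to distinct cycle representatives precise --- and checking that \textbf{P4} indeed limits the stray overlaps created at every crossing between cycles --- is where the real work lies; \textbf{P1} and \textbf{P2} enter only to keep all lengths and partial lengths nonnegative and subadditive throughout.
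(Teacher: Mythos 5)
Your first inequality, $\sum_i |r_i| \le \|SHP\| + 2\|C\|$, is sound, and it is essentially the paper's own final lemma (the counterpart of Theorem~8 of Blum et al.): shortcut $SHP$ down to one representative per cycle using \textbf{P2}, bound consecutive overlaps via \textbf{P4}, and telescope. The genuine gap is your second inequality, $\|H\| \le \sum_i |r_i| + \|C\|$ (equivalently $\w(C) \le \w(H) + \sum_i |r_i|$), which you correctly identify as the heart of the matter but only sketch --- and which is in fact \emph{false}. Take the paper's own lower-bound instance from Section~\ref{sec:lower}: $\mathcal{S} = \{ \mathtt{a}\mathtt{b}^n, \mathtt{b}^{n+1}, \mathtt{b}^n\mathtt{c}, \mathtt{b}^{n-1}\mathtt{c}^2 \}$, all four strings of length $n+1$. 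The Hamiltonian cycle $\mathtt{a}\mathtt{b}^n \to \mathtt{b}^{n+1} \to \mathtt{b}^n\mathtt{c} \to \mathtt{b}^{n-1}\mathtt{c}^2 \to \mathtt{a}\mathtt{b}^n$ has weight $n+n+n+0 = 3n$, which is maximal (every node's heaviest outgoing edge has weight $n$, except $\mathtt{b}^{n-1}\mathtt{c}^2$, all of whose outgoing overlaps are empty), so it is a maximum weight cycle cover $C$ with a single cycle, $\|C\| = n+4$ and $\sum_i |r_i| = n+1$. The LGA run displayed in that section is a valid run of PATH and produces $H = \mathtt{b}^{n-1}\mathtt{c}^2 \to \mathtt{a}\mathtt{b}^n \to \mathtt{b}^n\mathtt{c} \to \mathtt{b}^{n+1}$ with $\w(H) = n$, i.e.\ $\|H\| = 3n+4$. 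Your inequality would require $3n+4 \le (n+1)+(n+4) = 2n+5$, false for every $n \ge 2$.

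The failure mechanism is exactly what your sketch glosses over in the \textbf{R1} case: a single chosen edge can dominate \emph{two} edges of $C$, one through its tail and one through its head, and the Monge exchange of \textbf{P3} can only reroute the cover through it once. In the example, the chosen edge $(\mathtt{a}\mathtt{b}^n, \mathtt{b}^n\mathtt{c})$ of weight $n$ dominates both $C$-edges $(\mathtt{a}\mathtt{b}^n, \mathtt{b}^{n+1})$ and $(\mathtt{b}^{n+1}, \mathtt{b}^n\mathtt{c})$ of total weight $2n$, so any charging along these lines yields a bound of the shape $\w(C) \le 2\w(H) + \cdots$, never with coefficient $1$ on $\w(H)$. (Your claim is also not invariant under the choice of maximum weight cycle cover, which it would need to be: the alternative cover with a self-loop on $\mathtt{b}^{n+1}$ satisfies it, but only with additive slack $2$.) This factor of $2$ is intrinsic, and the paper absorbs it in a structurally different place: it keeps the culprit/weak-link block decomposition of Blum et al., duplicates the middle nodes ($V = V_l \sqcup V_r$), builds two cycle covers $C_l, C_r$, and proves $\|\mathrm{PATH}(\mathcal{G})\| \le 2\|SHP\| + \w(BC) - \|C_m\|$ --- that is, it pays the factor $2$ on $\|SHP\|$ rather than keeping coefficient $1$ --- and then applies the representative argument (your first inequality) to the culprit cycle cover $C_m$ rather than to a global maximum weight cycle cover. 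So your plan cannot be completed as stated: the second inequality must be replaced by something like the paper's doubled-cover bound, which is where all the real work of the proof lies.
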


The proof is in Appendix~\ref{sec:proof}.

\begin{corollary}
	The Locally Greedy Algorithm is a factor 4 approximation.
\end{corollary}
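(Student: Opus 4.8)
The plan is to derive the corollary from Theorem~\ref{theo:path} by two reductions: first, that the overlap graph $OG(\mathcal{S})$ is a pseudo-overlap graph; second, that every run of the Locally Greedy Algorithm on $\mathcal{S}$ is mirrored by a run of PATH on $OG(\mathcal{S})$, so that the two produce Hamiltonian paths of equal length. On $OG(\mathcal{S})$ one has $|v| = |s|$ for the string $s$ labelling $v$ and $\w(u,v) = |\ov(u,v)|$, so $\|H\|$ of a Hamiltonian path $H$ equals the length of the superstring obtained by merging its vertices in order; in particular $\|SHP\|$ equals $|\mathrm{OPT}(\mathcal{S})|$. Granting the second reduction, $\|\mathrm{PATH}(OG(\mathcal{S}))\| = |\mathrm{LGA}(\mathcal{S})|$, and Theorem~\ref{theo:path} gives $|\mathrm{LGA}(\mathcal{S})| \le 4\,|\mathrm{OPT}(\mathcal{S})|$.

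Verifying the four axioms is routine with the tools already assembled. Property \textbf{P1} holds because $\ov(u,v)$ is simultaneously a suffix of $u$ and a prefix of $v$, so its length is at most $\min\{|u|,|v|\}$. Writing $d(u,v) = |u| - \w(u,v) = |\pref(u,v)|$, property \textbf{P2} is precisely the triangle inequality~\eqref{eq:tri}. Property \textbf{P3} is Lemma~\ref{lem:monge} verbatim, since $\w = |\ov|$. Finally, \textbf{P4} follows from Lemma~\ref{lem:bound}: for $c_1 \in C_1$, $c_2 \in C_2$ in the maximum-weight (equivalently shortest) cycle cover, $\w(c_1,c_2) = |\ov(c_1,c_2)| < |C_1| + |C_2| = \|C_1\| + \|C_2\|$, and the strict inequality certainly implies the weak one required.

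The substance of the argument is the equivalence of LGA and PATH. I would first observe that the pair $(s,t)$ extracted by LGA is exactly a \emph{dominant} edge: the overlap largest among all pairs sharing the first string $s$ or the second string $t$ is, by definition, an edge dominating every edge out of $s$ and into $t$. I would then establish the key structural fact that merging along a dominant edge localizes overlaps to the endpoints: if $m$ is the merge of $s$ and $t$, then $\ov(m,x) = \ov(t,x)$ and $\ov(x,m) = \ov(x,s)$ for every remaining string $x$, since any strictly longer suffix of $m$ matching a prefix of $x$ would contradict the maximality built into the dominant choice (together with the assumption that no input is a substring of another). With this fact, each LGA merge amounts to contracting the chosen dominant edge while leaving all other relevant overlaps unchanged, so the current strings always correspond to vertex-disjoint sub-paths of $OG(\mathcal{S})$ and LGA never closes a cycle. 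This matches PATH: processing edges in a dominance-respecting order, PATH includes exactly the dominant edges that LGA commits to, skips edges dominated by an already chosen edge (rule \textbf{R1}, an endpoint already used up by a merge) and edges that would close a cycle (rule \textbf{R2}). Any tie-breaking run of LGA thus induces a dominance-respecting order witnessing the same Hamiltonian path, and the two lengths coincide.

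I expect the overlap-localization step to be the main obstacle: making rigorous that $\ov(m,x) = \ov(t,x)$ under the dominant-edge choice, and correspondingly that LGA's dynamic, merge-based view and PATH's static, edge-list view visit the same edges in a compatible order, is where the care is needed. Once this correspondence is pinned down, the remainder is bookkeeping and a direct appeal to Theorem~\ref{theo:path}.
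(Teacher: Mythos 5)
Your plan has the same two components as the paper's proof: verify that $OG(\mathcal{S})$ satisfies \textbf{P1}--\textbf{P4} (your checks coincide with the paper's, which likewise invokes Lemma~\ref{lem:monge} for \textbf{P3} and Lemma~\ref{lem:bound} for \textbf{P4}), and then apply Theorem~\ref{theo:path} after identifying every run of LGA with a run of PATH. The axiom verification is correct, and your overlap-localization lemma (after merging the chosen pair $(s,t)$ into $m$, one has $\ov(m,x)=\ov(t,x)$ and $\ov(x,m)=\ov(x,s)$, substring-freeness being preserved) is true and genuinely needed to view the current strings as vertex-disjoint subpaths of $OG(\mathcal{S})$. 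You are also right that the identification of LGA with PATH is the substantive step: the paper itself dispatches it with the single sentence ``LGA is a special case of PATH.''

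However, the route you propose for that identification breaks at its first claim: the pair extracted by LGA is \emph{not} in general a dominant edge. LGA maximizes only over ordered pairs of \emph{distinct, currently available} strings; it never compares its choice against the cycle-closing edge of a current path (the edge from the path's last vertex back to its first, i.e.\ a self-pair of a merged string), against edges incident to already-merged interior vertices, or against the reversed pair when only two strings remain. This is fatal for the simulation, because PATH's dominance-respecting order is a global constraint on \emph{all} edges of the fixed graph. Concretely, take four strings realizing $|\ov(z,x)|=5$, $|\ov(x,z)|=4$, $|\ov(y,z)|=3$, $|\ov(x,w)|=6$, with all other overlaps $0$ (fresh marker letters make this easy). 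LGA may legally choose the pairs $(z,x)$, then $(y,\,zx)$, then $(w,\,yzx)$, where $zx$ and $yzx$ denote the intermediate merges, producing the path $w\to y\to z\to x$: each choice is maximal among pairs sharing a first or second string, since $(x,z)$ is cycle-closing after the first merge and $(yzx,w)$ shares no component with $(w,yzx)$. Yet in every dominance-respecting order, $(x,w)$ precedes $(x,z)$, which precedes $(y,z)$; when $(x,w)$ is processed, no edge of the graph dominates it (it is the unique heaviest edge at its tail and at its head), and the edges chosen so far cannot contain a $w$-to-$x$ path (that would require $(y,z)$), so neither \textbf{R1} nor \textbf{R2} applies and PATH is forced to include $(x,w)$. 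Hence no run of PATH outputs LGA's path: LGA's edge $(y,z)$ is strictly dominated by an edge it never examined, exactly the configuration that the proof of Lemma~\ref{lem:dom} excludes for PATH runs. So the correspondence you defer as ``bookkeeping'' is not merely delicate -- as stated (both in your sketch and in the paper's one-line justification) it fails, and closing the gap requires either a weaker definition of PATH together with a reproof of Theorem~\ref{theo:path}, or a direct analysis of LGA.
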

\begin{proof}
	Note that LGA is a special case of PATH and the overlap graph is indeed a pseudo-overlap graph, where $|s|$ is the length of the corresponding string, $\w(s, t) = |\ov(s, t)|$ and the length of the path is the length of the corresponding superstring: \textbf{P1} and \textbf{P2} are obvious, \textbf{P3} is valid by Lemma~\ref{lem:monge} and \textbf{P4} is valid by Lemma~\ref{lem:bound}.
\end{proof}

\subsection{Number of Occurrences Instead of Lengths}

In this section, we prove that if we replace the length of a string by the number of occurrences of an arbitrary symbol ${\s}$ in it, the corresponding overlap graph remains a pseudo-overlap graph, which implies Theorem~\ref{theo:uni}. More specifically, for any non-empty string $\s$, let $\mathcal{G}_{\s}$ be a graph such that $\mathcal{V} = \mathcal{S}$,  $|s| = |s|_{\s}$ and $\w(s, t) = |\ov(s, t)|_{\s}$.

\begin{theorem}\label{theo:pseudo}
	$\mathcal{G}_{\s}$ is a pseudo-overlap graph.
\end{theorem}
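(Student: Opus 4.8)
The plan is to verify the four defining properties \textbf{P1}--\textbf{P4} for $\mathcal{G}_{\s}$, exploiting that the map $s \mapsto |s|_{\s}$ behaves exactly like length in the only respects the classical arguments use: it is nonnegative, additive under concatenation, and monotone under taking substrings. \textbf{P1} is immediate, since $\ov(u,v)$ is a suffix of $u$ and $\ov(v,u)$ is a prefix of $u$, so each is a substring of $u$ and carries at most $|u|_{\s}$ occurrences of $\s$. The real content is in reworking the proofs of Lemma~\ref{lem:monge} and Lemma~\ref{lem:bound} (used for \textbf{P3} and \textbf{P4} in the length case) so that they count occurrences rather than characters.

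For \textbf{P2} I would argue geometrically inside $v$. After cancelling $|u|_{\s}$, the inequality to prove is $|v|_{\s} + |\ov(u,w)|_{\s} \ge |\ov(u,v)|_{\s} + |\ov(v,w)|_{\s}$, where $\ov(u,v)$ is a prefix of $v$ and $\ov(v,w)$ is a suffix of $v$. If these two windows are disjoint inside $v$, then already $|\ov(u,v)|_{\s} + |\ov(v,w)|_{\s} \le |v|_{\s}$. Otherwise they share a middle block $y$; since $y$ is simultaneously a suffix of $\ov(u,v)$ (hence of $u$) and a prefix of $\ov(v,w)$ (hence of $w$), it is a common suffix of $u$ / prefix of $w$, so $y$ is a suffix of $\ov(u,w)$ and $|y|_{\s} \le |\ov(u,w)|_{\s}$. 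Inclusion--exclusion on the positions of $v$ covered by the prefix window and the suffix window gives $|\ov(u,v)|_{\s} + |\ov(v,w)|_{\s} \le |v|_{\s} + |y|_{\s}$, and the claim follows.

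For \textbf{P3} I would adapt the pivoting argument behind Lemma~\ref{lem:monge}: choose the overlap of greatest \emph{length} among $\ov(u,v),\ov(u,v'),\ov(u',v),\ov(u',v')$ as a pivot. If the pivot is $\ov(u,v)$, then $\ov(u,v')$ is a suffix and $\ov(u',v)$ a prefix of it, their shared middle block is a common suffix of $u'$ / prefix of $v'$ and hence a suffix of $\ov(u',v')$, and the same inclusion--exclusion as for \textbf{P2} yields exactly $|\ov(u,v)|_{\s} + |\ov(u',v')|_{\s} \ge |\ov(u,v')|_{\s} + |\ov(u',v)|_{\s}$; note that when the length-maximum already sits on the diagonal this needs no hypothesis on $\s$-counts at all. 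The main obstacle is precisely the mismatch between the hypothesis of \textbf{P3}, stated in terms of $\s$-counts, and this pivot, chosen by length: the length-maximal overlap may lie on the anti-diagonal, where bare inclusion--exclusion delivers the opposite inequality. To handle this I would feed in the hypothesis through monotonicity --- if $\ov(u,v)$ is a proper suffix of the longer $\ov(u,v')$ yet $|\ov(u,v)|_{\s} \ge |\ov(u,v')|_{\s}$, then the extra prefix of $\ov(u,v')$ carries no $\s$ --- and use such forced $\s$-free gaps to collapse the anti-diagonal cases back to the desired inequality. Making this case analysis exhaustive and watertight is where I expect the bulk of the work to lie.

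For \textbf{P4} I would use periodicity. Each node of a cycle $C$ in the maximum-weight cover is a factor of $z_C^{\infty}$, where $z_C$ is the cyclic word read off $C$; its ordinary length is the length of $C$, and its number of $\s$-occurrences is exactly $\|C\|$ in $\mathcal{G}_{\s}$. If $\ov(c_1,c_2)$ is shorter than the combined lengths of $C_1$ and $C_2$, I would split it into a prefix of length at most one period of $z_{C_1}$ and a suffix of length less than one period of $z_{C_2}$; since any window of at most one period of a periodic word carries at most as many $\s$ as the whole period, the two pieces contribute at most $\|C_1\|$ and $\|C_2\|$. It then remains to rule out a long overlap: if $|\ov(c_1,c_2)|$ reached the combined lengths, Fine and Wilf would force $z_{C_1}$ and $z_{C_2}$ to be powers of a common root $r$; if $r$ contains $\s$, merging $C_1$ and $C_2$ strictly increases the $\s$-weight, contradicting maximality of the cover, and if $r$ is $\s$-free then $\ov(c_1,c_2)$ is $\s$-free and the bound is trivial. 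The subtlety to watch here is that Lemma~\ref{lem:bound} concerns the shortest cover, whereas the cover relevant to $\mathcal{G}_{\s}$ maximizes $\s$-weight, so this last step must be re-derived rather than quoted.
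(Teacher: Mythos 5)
Your \textbf{P1} and \textbf{P2} are fine. Your \textbf{P2} differs only cosmetically from the paper's: the paper observes that $\pref(u,v)\pref(v,w)$ contains $\pref(u,w)$ and counts occurrences there, while you run inclusion--exclusion on the two overlap windows inside $v$; the arguments are interchangeable. Your \textbf{P4} is a genuinely different route: the paper splits $\ov(c_1,c_2)$ into a window of length at most $|C_1|$ and one of length less than $|C_2|$, bounds each by its periodicity lemma (Lemma~\ref{lem:tech}), and simply \emph{quotes} Lemma~\ref{lem:bound} for the required length bound, whereas you re-derive that bound via Fine--Wilf and an exchange argument. Your reason for doing so --- that \textbf{P4} speaks of the maximum $\s$-weight cover while Lemma~\ref{lem:bound} speaks of the length-shortest cover --- is a legitimate point of care which the paper's Lemma~\ref{lem:ubound} glosses over; your version is only a sketch, but the outline is sound.

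The genuine gap is \textbf{P3}, exactly where you predicted the bulk of the work would lie, and it cannot be closed in the way you propose. Your diagonal-pivot case is in substance the paper's entire proof: write $\ov(u,v)=xyz$ with $\ov(u',v)=xy$ and $\ov(u,v')=yz$, note that $y$ is a proper suffix of $u'$ and a proper prefix of $v'$, hence $|y|_\s\le|\ov(u',v')|_\s$, and finish by inclusion--exclusion. But that decomposition exists only when $\ov(u,v)$ is at least as \emph{long} as $\ov(u,v')$ and $\ov(u',v)$, and the hypothesis of \textbf{P3} for $\mathcal{G}_\s$ constrains only $\s$-counts. In your anti-diagonal case the inequality is not merely hard to prove --- it is false, so no bookkeeping of forced $\s$-free gaps can recover it. Take $\s=\mathtt{b}$ and $u=\mathtt{caab}$, $v=\mathtt{bcc}$, $u'=\mathtt{cb}$, $v'=\mathtt{aabc}$ (no string is a substring of another). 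Then $\ov(u,v)=\mathtt{b}$, $\ov(u,v')=\mathtt{aab}$, $\ov(u',v)=\mathtt{b}$, $\ov(u',v')=\varepsilon$, so $\w(u,v)=1\ge\max\{\w(u,v'),\w(u',v)\}$, yet
\[ \w(u,v)+\w(u',v')=1<2=\w(u,v')+\w(u',v). \]
You should also be aware that the paper's own proof shares this blind spot: its second case derives the decomposition $\ov(u,v)=xyz$, $\ov(u,v')=yz$, $\ov(u',v)=xy$ from the $\s$-count condition $|\ov(u,v)|_\s<|\ov(u,v')|_\s+|\ov(u',v)|_\s$, which silently presupposes length domination; the example above falls into that case and admits no such decomposition. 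What your completed case and the paper's argument both establish is the Monge inequality under the stronger hypothesis that $(u,v)$ dominates in \emph{length} --- which is what LGA's processing order actually guarantees --- but that is weaker than property \textbf{P3} as defined, so the theorem as stated is not what either proof delivers. In short: you located the real difficulty correctly, your proposal is incomplete exactly there, and the completion you envision is impossible without changing the statement.
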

\begin{proof}
	\begin{description}
		\item[\textbf{P1.}] This property clearly holds.
		
		\item[\textbf{P2.}] Since $\pref(u, v)\pref(v, w)$ contains $\pref(u, w)$, the number of occurrences in it is at least the number of occurrences in $\pref(u, w)$.
		
		\item[\textbf{P3.}] If $|\ov(u, v)|_{\s} \geq |\ov(u, v')|_{\s} + |\ov(u', v)|_{\s}$ then this property holds. If $|\ov(u, v)|_{\s} < |\ov(u, v')|_{\s} + |\ov(u', v)|_{\s}$ then there are strings $x, y, z$ such that $\ov(u, v) = xyz$, $\ov(u, v') = yz$, $\ov(u', v) = xy$. Since $y$ is a suffix of $u'$ and also a prefix of $v'$, then $\ov(u', v')$ contains $y$ and hence $|\ov(u', v')|_{\s} \geq |y|_{\s}$. Finally,
		\[ |\ov(u, v)|_{\s} = |xyz|_{\s} = |xy|_{\s} + |yz|_{\s} - |y|_{\s} \geq |\ov(u, v')|_{\s} + |\ov(u', v)|_{\s} - |\ov(u', v')|_{\s}. \]
		
		\item[\textbf{P4.}]
		
		For a string $s$ by $s[i]$, we denote the $i$-th symbol of $s$. If $s[i] = \s$, then $i$ is \emph{an occurrence} of ${\s}$ in $s$.
		Let $s(\pi)$ be a superstring, corresponding to the permutation $(s_{\pi(1)}, \dots, s_{\pi(n)})$.  Then 
		\begin{gather}
			|s(\pi)|_{\s} = \sum_{i=1}^{n-1}|\pref(s_{\pi(i)}, s_{\pi(i+1)})|_\s + |s_{\pi(n)}|_{\s}
		\end{gather}
		
		Similarly, if $C = v_1 \to \dots \to v_k \to v_1$ is a cycle in the overlap graph, by $|C|_{\s}$ denote the following:
		\begin{gather}\label{eq:Cp}
			|C|_{\s} := \sum_{i = 1}^k |\pref(v_i, v_{(i+1) \, \mathrm{mod} \, k})|_\s.
		\end{gather}
		
		\begin{lemma}\label{lem:tech}
			Let $C = v_1 \to \dots \to v_k \to v_1$ be a cycle in the overlap graph and $v \in C$. If $l \leq r \leq |v|$ and $r - l < |C|$, then the number of occurrences of ${\s}$ in the substring $v[l, r]$ of $v$ from $l$-th to $r$-th symbol is not greater than $| C |_{\s}$.
		\end{lemma}
		\begin{proof}
			Without loss of generality, assume that $v = v_1$. To begin with, note, that if $s = w^ku$, $k > 0$, is a prefix of an infinite cyclic string $w^\infty$ with the period $w$, then $|s[i, i+|w|-1]|_\s$ is the same for all $i$ and is equal to $|w|_\s$. Also, if $[l', r'] \subset [l, r]$, then $|s[l', r']|_\s \leq |s[l, r]|_\s$. Therefore, if $r - l < |w|$, then
			\begin{gather}\label{eq:1wlr}
				|w|_\s = |s[1, |w|]|_\s \geq |s[l, r]|_\s. 
			\end{gather}
			
			Let $w = \pref(v_1, v_2)\pref(v_2, v_3)\dots\pref(v_k, v_1)$.
			Then $v = v_1$ is a prefix of $w^\infty$ and $|w| = |C|$, $|w|_\s = |C|_\s$.
			Therefore, if $r - l < |C|$, then by~\eqref{eq:1wlr} $|v[l, r]|_\s \leq |C|_{\s}$.
		\end{proof}
		
		Now we are ready to prove that property \textbf{P4} holds.
		\begin{lemma}\label{lem:ubound}
			Let $C_1$ and $C_2$ be two cycles in the shortest cycle cover with $c_1 \in C_1$ and $c_2 \in C_2$. Then
			\[ |\ov(c_1, c_2)|_{\s} < |C_1|_{\s} + |C_2|_{\s}. \]
		\end{lemma}
		\begin{proof}
			By~Lemma~\ref{lem:bound}, $|\ov(c_1, c_2)| \leq |C_1| + |C_2|$.
			If $|\ov(c_1, c_2)| \leq |C_i|$ for some $i \in \{1, 2\}$, then $\ov(c_1, c_2) = c_i[l, r]$ for some $l \leq r$, $r - l < |C_i|$ and Lemma~\ref{lem:tech} finishes the proof.
			
			Otherwise, let $\ov(v, v') = xy$, where $|x| = |C_1|$ and $|y| \leq |C_2|$.
			Then $x = c_1[l_x, r_x]$, where $r_x - l_x = |C_1| - 1$, and $y = c_2[l_y, r_y]$, where $r_y - l_y < |C_2|$.
			Again, by Lemma~\ref{lem:tech}, $|\ov(c_1, c_2)|_{\s} = |x|_\s + |y|_\s \leq \|C_1\|_{\s} + \|C_2\|_{\s}.$
		\end{proof}
	\end{description}
\end{proof}

\subsection{Proof of Theorem~\ref{theo:uni}}

Note that if $|\ov(s, t)| \geq |\ov(s, t')|$ then $|\ov(s, t)|_\s \geq |\ov(s, t')|_\s$ for any $\s$. Similarly, if $|\ov(s, t)| \geq |\ov(s', t)|$ then $|\ov(s, t)|_\s \geq |\ov(s', t)|_\s$. This means that LGA is an instantiation of PATH for all $\mathcal{G}_\s$ simultaneously.

Theorem~\ref{theo:pseudo} and the fact, that $|\mathrm{PATH}(\mathcal{G}_\s)|_\s$ is precisely $\| \mathrm{PATH}(\mathcal{G}_\s) \|$, finish the proof.

\section{Lower bounds}\label{sec:lower}

\subsection{LGA is at Least 3-approximation}

The main source of suboptimality of LGA is that if there are two strings $s$ and $t$ such that $s$ has only empty overlaps to the right and $t$ has only empty overlaps to the left, it can merge them, even if there are much better options. In particular, if there are only two remaining strings, LGA can merge them in any order.

Consider a set $\mathcal{S} = \{ \mathtt{a}\mathtt{b}^n, \mathtt{b}^{n+1}, \mathtt{b}^n \mathtt{c}, \mathtt{b}^{n-1}\mathtt{c}^2 \}$.
The shortest solution $\mathtt{a} \mathtt{b}^{n+1} \mathtt{c}^2$ has length $n+4$.
Let us consider the following solution:
\[ \mathtt{b}^{n-1}\mathtt{c}^2 \xrightarrow{2} \mathtt{a}\mathtt{b}^n \xrightarrow{1} \mathtt{b}^n \mathtt{c} \xrightarrow{3} \mathtt{b}^{n+1}, \]
where superscripts indicate the order of merges.
Here the first merge is valid due to the largest overlap, the second is valid because $\mathtt{b}^{n-1}\mathtt{c}^2$ has empty overlap to the right with both remaining strings, and $\mathtt{a}\mathtt{b}^n$ has empty overlap to the left with both remaining strings, and the last merge is always valid thanks to the observation from the beginning of the section.
The corresponding superstring $\mathtt{b}^{n-1}\mathtt{c}^2\mathtt{a}\mathtt{b}^n\mathtt{c}\mathtt{b}^{n+1}$ has the length $3n+4$, which is asymptotically 3 times longer than $n+4$.

This example is similar to the dataset $\{ \mathtt{a}\mathtt{b}^n, \mathtt{b}^{n+1}, \mathtt{b}^n \mathtt{c} \}$ from the introduction, where GA could get twice as long a solution, but now there are two strings that we may place one to the left and one to the right of $\mathtt{a}\mathtt{b}^n\mathtt{c}$, so the resulting solution is three times longer.

It is interesting that if we continue and consider, say, a dataset
\[ \{ \mathtt{a}\mathtt{b}^n, \mathtt{b}^{n+1}, \mathtt{b}^n \mathtt{c}, \mathtt{b}^{n-1}\mathtt{c}^2, \mathtt{b}^{n-2}\mathtt{c}^3 \}, \]
then we will not be able to get a 4 times longer solution, because after we get the string $\mathtt{b}^{n-2}\mathtt{c}^3 \mathtt{a}\mathtt{b}^n\mathtt{c}$, it and $\mathtt{b}^{n-1}\mathtt{c}^2$ will have long overlaps on both sides.

\subsection{GA is at Least Uniform 2.5-approximation}\label{sec:2.5}

Consider a set $\mathcal{S} = \{ \mathtt{aaaab}, \mathtt{aaabaa}, \mathtt{aabaaba}, \mathtt{baabaa}, \mathtt{abaaaa} \}$. A solution \[ \mathtt{aaaabaabaaaa} \] has 2 occurrences of $\s = \mathtt{b}$.

Let us consider the following solution:
\[ \mathtt{baabaa} \xrightarrow{1} \mathtt{aabaaba} \xrightarrow{4} \mathtt{aaabaa} \xrightarrow{2} \mathtt{abaaaa} \xrightarrow{3} \mathtt{aaaab}. \]
The first merge is valid due to the longest possible overlap of length 5, the second and third are valid due to the longest possible overlaps of length 4, and the last is valid due to the longest possible overlap of length 1.
The corresponding superstring $\mathtt{baabaabaaabaaaab}$ has~5 occurrences of $\mathtt{b}$.

This example was found through computer search and creates more questions than answers.
First, it is not clear whether this example can be tweaked to achieve uniform 3-approximation of GA.

Second, it seems like this example cannot be easily modified into a set for which GA would construct a solution more than twice as long as the shortest one. A natural approach in this direction would be to replace the symbol $\mathtt{b}$ with $\mathtt{b}^k$, but in this case this would also require increasing the number of $\mathtt{a}$'s to keep the chosen overlaps maximal, and this does not result in more than two times longer solution.

\section{All Instantiations of GA Achieve the Same Factor of the Uniform Approximation}\label{sec:equiv}

\begin{proof}[Proof of Theorem~\ref{theo:equiv}]
	
	The proof of this result is based on the technique from~\cite{nikolaev2021all}, which was used to prove that if there is a factor $\lambda$ approximation instanitation $A$ of GA, then all instantiations are $\lambda$-approximations.
	The main idea is the following: suppose to the contrary, that there is an instantiation $B$ of GA and a set $\mathcal{S}$, such that $|B(\mathcal{S})| > \lambda |\mathrm{OPT}(\mathcal{S})|$.
	Using this set and the order of the merges of $A$, construct a new set $\mathcal{S}'$, such that $|B(\mathcal{S}')| > \lambda |\mathrm{OPT}(\mathcal{S}')|$, but there is only one greedy order of \emph{non-trivial} merges (that is, greedy merges with non-empty overlap), so $|A(\mathcal{S}')| = |B(\mathcal{S}')|$.
	Then $A$ is not $\lambda$-approximation, which contradicts its definition.
	
	Let $A$ be a uniform $\lambda$-approximation instantiation of GA, and suppose to the contrary that there are another instantiation $B$, a set $\mathcal{S}$ and a symbol $\s$, such that $| B(\mathcal{S}) |_{\s} > \lambda | s(\pi_{OPT}) |_{\s}$, where $\pi_{OPT}$ is some permutation. Let $B(\mathcal{S}) = t_1 t_2 \dots t_k$, where $\{ t_1, \dots, t_k \}$ is a set of strings, constructed by $B$ up to the moment before the first merge with an empty overlap. Clearly, $|B(\mathcal{S})|_{\s} = \sum_i |t_i|_{\s}$
	
	For every string $s_i = c_1c_2\dots c_{|s_i|} \in \mathcal{S}$ define a string
	\begin{gather*}\label{eq:dist}
		s'_i = \$^{m-\alpha_i} c_1 \$^m c_2 \$^m c_3 \$^m \dots \$^m c_{|s_i|} \$^{\beta_i},
	\end{gather*}
	where $\$$ is \emph{a sentinel}, a symbol that does not occur in $\mathcal{S}$, and $m$, $\alpha_i < m$ and $\beta_i < m$ are chosen in such a way, that all the instantiations of GA construct the same set $\{ t'_1, \dots, t'_k \}$ before the first trivial merge, where $t'_i$ is obtained by merging strings from $\mathcal{S}'$ with the same indices and in the same order as in $t_i$.
	The existence of such $m$, $\alpha_i$ and $\beta_i$ is proved in Section~3 of~\cite{nikolaev2021all}.
	
	Note, that $|s'_i|_{\s} = |s_i|_{\s}$ and $|\ov(s'_i, s'_j)|_{\s} = |\ov(s_i, s_j)|_{\s}$.
	Hence, $|s'(\pi)|_{\s} = |s(\pi)|_{\s}$ for each $\pi$ and $|t'_i|_{\s} = |t_i|_{\s}$ for each $i$. Therefore,
	\[ | A(\mathcal{S}') |_{\s} = \sum_i |t'_i|_{\s} = \sum_i |t_i|_{\s} = |B(\mathcal{S})|_{\s} > \lambda |s(\pi_{OPT})|_{\s} = \lambda |s'(\pi_{OPT})|_{\s}, \]
	and $A$ is not uniform $\lambda$-approximation. This contradiction with the definition of $A$ finishes the proof.
\end{proof}

\section{Proof of Theorem~\ref{theo:path}}\label{sec:proof}

As was mentioned before, this proof is a modification of the proof of the 4-approximation of the Greedy Algorithm, presented in~\cite{BJLTY1991}. The overall structure is similar, and the main goal is to get rid of all arguments that appeal to the fact that the Greedy Algorithm merges pairs of strings with the largest overlap.
For many statements below, their counterparts from~\cite{BJLTY1991} are indicated in brackets after the title.

Consider an algorithm CYC that goes through a list of all edges in $\mathcal{G}$ in the dominance respecting order and does not include an edge $(u, v)$ in the final solution if and only if it is dominated by an already chosen edge. The resulting set of edges is a cycle cover on $\mathcal{G}$, and Monge inequality ensures that it is the shortest cycle cover.

\begin{lemma}[Theorem~10]\label{lem:cyc}
	Let $C$ be a cycle cover constructed by \emph{CYC}. Then $C$ is a shortest cycle cover.
\end{lemma}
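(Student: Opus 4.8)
The plan is to reduce the claim to showing that CYC produces a \emph{maximum weight} cycle cover, and then to establish optimality of this greedy procedure by an exchange argument driven by the Monge inequality (\textbf{P3}).

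First, recall the identity $\|C\| = \sum_{v\in\mathcal{V}}|v| - \w(C)$ recorded above: since $\sum_{v}|v|$ is common to all cycle covers, the shortest cycle cover is exactly the maximum weight one, so it suffices to prove $\w(C)\ge\w(C^\ast)$ for every cycle cover $C^\ast$, where $C$ is CYC's output. I would first note the structural consequences of processing edges in a dominance-respecting order: whenever two edges share a source (or a target), the heavier one is processed first (ties broken arbitrarily), so the first edge CYC chooses is dominant, and more generally CYC retains exactly one edge leaving each vertex and one entering each vertex, confirming that $C$ is a cycle cover.

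The heart of the argument is an invariant maintained while scanning the edges in processing order: after the first $i$ edges have been handled there is a maximum weight cycle cover $C^\ast_i$ containing every edge CYC has so far chosen and avoiding every edge CYC has so far discarded. The base case takes $C^\ast_0$ to be any maximum weight cover. When CYC discards $e=(u,v)$, some already-chosen edge sharing $u$ or $v$ dominates it; that edge lies in $C^\ast_i$, which therefore cannot contain $e$, so $C^\ast_{i+1}=C^\ast_i$ works. The interesting case is when CYC includes $e=(u,v)$ but $e\notin C^\ast_i$. Letting $(u,v')$ and $(u',v)$ be the edges of $C^\ast_i$ leaving $u$ and entering $v$, I would perform the swap $C^\ast_{i+1}=\bigl(C^\ast_i\setminus\{(u,v'),(u',v)\}\bigr)\cup\{(u,v),(u',v')\}$, which again has in- and out-degree one at every vertex (hence is a cycle cover) and, by \textbf{P3}, has weight at least $\w(C^\ast_i)$ once we know $\w(u,v)\ge\max\{\w(u,v'),\w(u',v)\}$.

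The main obstacle — and the only place the processing order truly enters — is verifying that the Monge precondition and the invariant survive the swap. I would argue that $(u,v')$ must be processed \emph{after} $e$: if it came earlier then, sharing source $u$, it would have weight at least $\w(u,v)$, and being already handled it is either chosen (forcing CYC to discard $e$, a contradiction) or discarded (contradicting $(u,v')\in C^\ast_i$ and the avoidance part of the invariant). Being processed after $e$ while sharing source $u$, it cannot strictly dominate $e$, giving $\w(u,v)\ge\w(u,v')$; symmetrically $\w(u,v)\ge\w(u',v)$, which is precisely the hypothesis of \textbf{P3}. These same ``processed later'' facts show $(u,v')$ and $(u',v)$ are not among the already-chosen edges, and a short case check — using that in $C^\ast_i$ the only in-edge of $v'$ is $(u,v')$ and the only out-edge of $u'$ is $(u',v)$ — shows the newly added $(u',v')$ is not among the already-discarded edges; the degenerate and self-loop cases ($u=v$ or $u'=v'$) are handled directly. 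Thus the invariant is preserved, and iterating through the last edge yields a maximum weight cover $C^\ast_N\supseteq C$; since both have exactly $n$ edges, $C=C^\ast_N$ is maximum weight, hence shortest.
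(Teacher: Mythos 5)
Your proof is correct, but it takes a genuinely different route from the paper's. Both arguments reduce ``shortest'' to ``maximum weight'' via $\|C\| = \sum_v |v| - \w(C)$, and both hinge on exactly the same Monge swap (replace $(u,v')$ and $(u',v)$ by $(u,v)$ and $(u',v')$), but the scaffolding around that swap differs. The paper gives a one-shot extremal argument: among all shortest cycle covers it fixes one, $C'$, sharing the most edges with CYC's output $C$; it then takes a \emph{maximum-weight} edge $(u,v')\in C'\setminus C$, notes that, being rejected by CYC, it is dominated by some chosen edge (WLOG of the form $(u,v)\in C$), obtains the second Monge hypothesis $\w(u,v')\ge\w(u',v)$ for free from the maximality of $(u,v')$ within $C'\setminus C$, and reaches a contradiction after a single swap (the new cover is no lighter and shares one more edge with $C$). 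You instead run an induction along the processing order, maintaining a maximum-weight cover that contains every chosen edge and avoids every discarded edge; your Monge hypotheses come from the dominance-respecting order rather than from an extremal choice, and you must carry out bookkeeping the paper's formulation sidesteps entirely --- that the two swapped-out edges have not yet been processed, and that the swapped-in edge $(u',v')$ was never discarded (your case check via the in-edge of $v'$ and out-edge of $u'$ in $C^\ast_i$ is valid). The paper's proof is shorter and needs no avoidance invariant; yours is heavier on case analysis but makes explicit where the processing order enters and delivers a stronger conclusion along the way: every prefix of CYC's decisions is consistent with some maximum-weight cover, the classic ``greedy stays ahead'' certificate, of which the lemma is the final instance.
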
 
\begin{proof}
	Among the shortest cycle covers consider a cycle cover $C'$ that has the maximum number of edges in common with $C$. We prove that $C'$ is, in fact, equal to $C$.
	
	Let $(u, v')$ be an edge in $C' \setminus C$ with the maximum weight. Since $(u, v')$ was not chosen by CYC, it is dominated by some edge in $C$. Without loss of generality, assume that this edge is of form $(u, v)$. As $(u, v) \notin C'$, there must be an edge of form $(u', v)$. Since $\w(u, v) \geq \w(u, v') \geq \w(u', v)$, where second inequality holds by the definition of $(u, v')$, the edge $(u, v)$ is also dominating $(u', v)$. By Monge inequality, replacing $(u, v')$ and $(u', v)$ with $(u, v)$ and $(u', v')$ results in a cycle cover $C''$ with at least as large weight (and hence at most as large length) as $C'$, that has at least one more edge in common with $C$. This contradicts the definition of $C'$.
\end{proof}

Now let us return to the algorithm PATH. The resulting set of chosen edges $\mathrm{PATH}(\mathcal{G})$ forms a Hamiltonian path $v_{\pi(1)} \to \dots \to v_{\pi(n)}$ for some permutation~$\pi$. For convenience, let us renumber $V$ so that $v_{\pi(i)}$ becomes $v_i$.

Call an edge $(v_j, v_i)$, rejected because of \textbf{R2}, \emph{a bad back edge}, which is ``back'' because necessarily $i \leq j$.
\begin{observation}\label{obs:earlier}
	If $(v_j, v_i)$ is a bad back edge, then all edges $(v_i, v_{i+1}), \dots, (v_{j-1}, v_{j})$ were processed earlier and edges $(v_{i-1}, v_i)$ and $(v_{j}, v_{j+1})$ will be processed later.
\end{observation}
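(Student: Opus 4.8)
The plan is to decode the hypothesis that $(v_j, v_i)$ is rejected by \textbf{R2} and then to exploit the tension between the dominance-respecting processing order and the two rejection rules.

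First I would dispatch the ``processed earlier'' claim. Rejection under \textbf{R2} means that, at the instant $(v_j, v_i)$ is examined, it is not dominated by any already chosen edge, yet adding it would close a cycle. A single directed edge $(v_j, v_i)$ can close a cycle only if the edges chosen so far already contain a directed path from $v_i$ to $v_j$. Here I rely on the fact that the chosen edges are append-only, so they always form a subset of the final Hamiltonian path $v_1 \to \cdots \to v_n$; consequently any directed path they contain from $v_i$ to $v_j$ must be the contiguous segment $v_i \to v_{i+1} \to \cdots \to v_j$, which incidentally forces $i \le j$. Thus every edge $(v_i, v_{i+1}), \dots, (v_{j-1}, v_j)$ was already chosen, hence processed before $(v_j, v_i)$.

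Next I would handle the ``processed later'' claim, say for $(v_{i-1}, v_i)$ when $i > 1$ (the case $i = 1$ being vacuous). Arguing by contradiction, suppose $(v_{i-1}, v_i)$ were processed before $(v_j, v_i)$. Since $(v_{i-1}, v_i)$ lies on the final path it is chosen, so when $(v_j, v_i)$ is examined it is an already chosen edge sharing the target $v_i$. If $\w(v_{i-1}, v_i) \ge \w(v_j, v_i)$, then $(v_{i-1}, v_i)$ dominates $(v_j, v_i)$, which would trigger \textbf{R1} rather than \textbf{R2}; if instead $\w(v_{i-1}, v_i) < \w(v_j, v_i)$, then $(v_j, v_i)$ strictly dominates $(v_{i-1}, v_i)$, so the dominance-respecting order would have processed $(v_j, v_i)$ first. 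Either way we reach a contradiction, so $(v_{i-1}, v_i)$ is processed later. The edge $(v_j, v_{j+1})$ (when $j < n$) is treated by the mirror-image argument, now with the two edges sharing the source $v_j$ and the same weight dichotomy between \textbf{R1} and an order reversal.

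The step I would watch most carefully is the bookkeeping behind ``already chosen at processing time'': the whole argument hinges on selections being irrevocable and on path edges being actually selected (not merely inspected) the moment they are processed, so that membership in the chosen set is governed solely by whether an edge's processing time precedes that of $(v_j, v_i)$. Once that is pinned down, the rest is a clean two-way case split driven by the definition of domination and the ordering.
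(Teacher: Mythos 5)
Your proof is correct and matches the (unstated) reasoning the paper relies on: the paper presents Observation~\ref{obs:earlier} without proof, treating it as immediate from the fact that chosen edges are a subset of the final Hamiltonian path, so a cycle through $(v_j, v_i)$ forces the whole segment $v_i \to \cdots \to v_j$ to be already chosen, while the dominance dichotomy (either $(v_{i-1}, v_i)$ would dominate $(v_j, v_i)$, triggering \textbf{R1}, or $(v_j, v_i)$ strictly dominates it and must come first in a dominance-respecting order) rules out earlier processing of $(v_{i-1}, v_i)$ and, symmetrically, of $(v_j, v_{j+1})$. Your attention to the append-only, single-pass bookkeeping is exactly the right thing to pin down, and there are no gaps.
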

Let us say that a bad back edge $(v_j, v_i)$ \emph{spans an interval} $[i, j]$. Observation~\ref{obs:earlier} implies that
\begin{observation}[Lemma 13]
	The intervals $[i, j]$ and $[i', j']$ spanned by two bad back edges are disjoint, or one contains the other.
\end{observation}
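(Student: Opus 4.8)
The plan is to prove the dichotomy for the two intervals $[i,j]$ and $[i',j']$ spanned by bad back edges $(v_j,v_i)$ and $(v_{j'},v_{i'})$ by ruling out the one remaining configuration: that they \emph{overlap} properly, i.e.\ (after relabeling so that $i \le i'$) we have $i \le i' \le j < j'$, with $i' > i$ and $j' > j$ so that neither interval contains the other and their interiors intersect. I would first dispose of the trivial case $i = i'$ (or $j = j'$): two distinct bad back edges cannot share an endpoint pair, and if they share just one index one checks directly using Observation~\ref{obs:earlier} that the interval containment holds, so the genuine case to handle is strict overlap.

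So suppose for contradiction that $i < i' \le j < j'$. The key is to play the two instances of Observation~\ref{obs:earlier} against each other. Applying the observation to the bad back edge $(v_j, v_i)$, every path edge $(v_\ell, v_{\ell+1})$ with $i \le \ell \le j-1$ was processed \emph{before} $(v_j,v_i)$; in particular, since $i \le i' \le j-1$ lies in this range, the edge $(v_{i'-1}, v_{i'})$ was processed before $(v_j, v_i)$. On the other hand, applying the observation to the bad back edge $(v_{j'}, v_{i'})$, the edge $(v_{i'-1}, v_{i'})$ — being exactly the edge entering the left endpoint of the interval $[i',j']$ — will be processed \emph{after} $(v_{j'}, v_{i'})$. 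The plan is then to chain the processing order through the path edges common to both intervals to derive a contradiction in the dominance-respecting ordering.

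Concretely, I would argue along the following line. The edge $(v_{j'},v_{i'})$ is processed before $(v_{i'-1},v_{i'})$, which is processed before $(v_j,v_i)$. But a bad back edge is rejected under \textbf{R2} precisely because at the moment it is processed it is \emph{not} dominated yet would close a cycle, meaning all the path edges forming that cycle are already present in the partial solution. For $(v_{j'},v_{i'})$ to be rejected via \textbf{R2}, the entire forward path $v_{i'} \to v_{i'+1} \to \dots \to v_{j'}$ must already be chosen, hence all those forward edges were processed before $(v_{j'},v_{i'})$; in particular the edge $(v_j, v_{j+1})$ (with $i' \le j < j'$) was processed before $(v_{j'}, v_{i'})$. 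I would then contrast this with the consequence of Observation~\ref{obs:earlier} applied to $(v_j,v_i)$, which asserts that $(v_j, v_{j+1})$ is processed \emph{later} than $(v_j,v_i)$, and combine the two chains of ``earlier than'' and ``later than'' relations around the overlapping indices to close the loop into a strict cyclic inequality of processing times, which is impossible.

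The main obstacle, and the step I would be most careful about, is making the bookkeeping of ``processed earlier/later'' fully rigorous rather than hand-wavy: Observation~\ref{obs:earlier} gives information relative to each bad back edge separately, and turning the overlap hypothesis into a genuine cyclic contradiction requires identifying the right single edge (here $(v_{i'-1},v_{i'})$ or $(v_j,v_{j+1})$) that is simultaneously forced to be early by one interval and late by the other. I expect the cleanest route is to pin down one edge lying strictly inside the overlap $[i',j]$ whose processing time is squeezed between two incompatible bounds; the case analysis distinguishing whether the endpoints coincide or are strictly nested is routine by comparison, so the real content is this single squeezing argument.
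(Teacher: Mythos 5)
Your proof is correct and takes essentially the same route as the paper: the paper states this observation as a direct consequence of Observation~\ref{obs:earlier} without spelling out the details, and your argument---squeezing the edges $(v_{i'-1},v_{i'})$ and $(v_j,v_{j+1})$ between incompatible ``processed earlier''/``processed later'' constraints from the two bad back edges to get a cyclic ordering contradiction---is exactly the intended implication. Note only that your crossing-case argument (with $i < i' \le j < j'$) already covers the shared-endpoint configuration $i' = j$, so the preliminary case analysis can be folded into it.
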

Thus, the bad back edges do not cross each other.
Call a path $v_i \to \dots \to v_j$, covered by some innermost bad back edge $(v_j, v_i)$, \emph{a culprit}.
Call the edge included last between two successive culprits \emph{a weak link}
\footnote{
	In~\cite{BJLTY1991}, a weak link is \emph{the shortest} edge between two successive culprits.
	Of course, the shortest edge is the last included in the case of GA.
	Authors remarked that weak links are also shorter than all the edges in the corresponding culprits, but they did not use this property.
}.
If the weak links are removed, the Hamiltonian path $s_1 \to \dots \to s_n$ falls into \emph{blocks}, each of which consists of the culprit as the middle segment and, possibly empty, \emph{left} and \emph{right extensions}. Thus $\mathcal{V}$ is divided into sets $\mathcal{V}_l$, $\mathcal{V}_m$ and $\mathcal{V}_r$ of the left, middle, and right nodes, respectively.
\begin{observation}[Lemma 14]\label{obs:14}
	Let $(v_j, v_i)$ be a bad back edge. Node $v_i$ is the left node or the first node of a culprit. Node $v_j$ is either a right node or the last node of a culprit.
\end{observation}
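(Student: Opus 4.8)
The plan is to prove the two assertions—about the tail $v_i$ and about the head $v_j$ of the bad back edge—separately, noting that they are mirror images of each other under reversing the orientation of the path, so it suffices to treat $v_i$ in detail and obtain the statement for $v_j$ by symmetry. Recall that $\mathcal{V}$ is partitioned into the left, middle, and right nodes $\mathcal{V}_l,\mathcal{V}_m,\mathcal{V}_r$, so I would argue by cases according to which part contains $v_i$. If $v_i\in\mathcal{V}_l$ there is nothing to prove, so the whole content is to show (a) if $v_i\in\mathcal{V}_m$ then $v_i$ is the \emph{first} node of its culprit, and (b) $v_i\notin\mathcal{V}_r$.

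For (a) I would use the nesting structure of the spanned intervals. Suppose $v_i$ lies on a culprit, i.e.\ on the path $v_a\to\dots\to v_b$ cut out by an innermost bad back edge $(v_b,v_a)$, so $a\le i\le b$. The interval $[i,j]$ spanned by $(v_j,v_i)$ and the interval $[a,b]$ share the point $i$, so by the non-crossing property (two spanned intervals are disjoint or nested) they are nested. Either $[i,j]\subseteq[a,b]$, which forces $[i,j]=[a,b]$ because $[a,b]$ is innermost, giving $i=a$; or $[a,b]\subseteq[i,j]$, which gives $i\le a$ and, together with $a\le i$, again $i=a$. In both cases $v_i$ is the first node of the culprit.

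For (b), which I expect to be the crux, I would assume toward a contradiction that $v_i\in\mathcal{V}_r$, so $v_i$ lies in the right extension of some block; let $v_b$ be the last node of that block's culprit and, when the block is not the last one, let $(v_w,v_{w+1})$ be the weak link closing it, so that $b<i\le w$. Two sub-cases arise. If the whole interval $[i,j]$ stays inside the right extension (in particular if the block is the last one, whose right extension runs to $v_n$), then $[i,j]$ contains some minimal, hence innermost, spanned interval, i.e.\ a culprit, whose nodes would simultaneously be middle nodes and right nodes—impossible. Otherwise $j>w$, so the weak link $(v_w,v_{w+1})$ lies strictly inside $[i,j]$, and I would run a timing argument. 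By Observation~\ref{obs:earlier} every edge inside $[i,j]$, and in particular this weak link, is processed before $(v_j,v_i)$, whereas the edge $(v_{i-1},v_i)$ is processed after $(v_j,v_i)$; chaining these, the weak link is processed before $(v_{i-1},v_i)$. But $(v_{i-1},v_i)$ itself lies between the two successive culprits separated by this weak link and is not the weak link (its head $v_i$ sits at position $i\le w$, while the weak link's head sits at $w+1$), so by the definition of a weak link as the \emph{last} edge included between two successive culprits, $(v_{i-1},v_i)$ must have been included before the weak link—a contradiction.

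The main obstacle is exactly this right-node case (b), and within it the bookkeeping of the weak-link timing argument: one must verify that $(v_{i-1},v_i)$ genuinely counts as an edge ``between two successive culprits'' for the very weak link being compared against, and that the ``processed before/after'' relation of Observation~\ref{obs:earlier} converts into ``included before/after'' for the two chosen edges in question (both the weak link and $(v_{i-1},v_i)$ are edges of the final path, so for them inclusion order coincides with processing order). Once (a) and (b) are established, the assertion for $v_i$ follows, and the mirror argument—using that $(v_j,v_{j+1})$ is processed after $(v_j,v_i)$, and that a left node would force either an impossible culprit inside a left extension or the analogous timing contradiction—yields that $v_j$ is a right node or the last node of a culprit.
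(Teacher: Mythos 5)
Your proof is correct and follows essentially the same route as the paper's: the heart of both arguments is the timing contradiction showing that a right node $v_i$ would force the weak link of its block (which lies inside the spanned interval $[i,j]$, hence by Observation~\ref{obs:earlier} is included before $(v_j,v_i)$ is processed and thus before $(v_{i-1},v_i)$ is included) to fail its definition as the last-included edge between two successive culprits, with the claim for $v_j$ obtained by the mirror argument. You are in fact more thorough than the paper, which leaves implicit both the middle-node case (settled by the nesting of spanned intervals) and the degenerate sub-case where $[i,j]$ lies entirely inside a right extension (in particular the last block, which has no closing weak link).
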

\begin{proof}
	If $v_i$ is a right node, then an edge $(v_{i-1}, v_i)$ lies on the left of the corresponding weak link.
	By observation~\ref{obs:earlier} this weak link was included before $(v_j, v_i)$ was processed, and hence before $(v_{i-1}, v_i)$ was included. This contradicts the definition of a weak link. A similar argument holds for $v_j$. 
\end{proof}

Consider the graph $G_l = (V_l, E_l)$, where $V_l = \mathcal{V}_l \cup \mathcal{V}_m$ and $E_l$ is a set of all chosen non-weak edges between left and middle nodes. Let $C_l$ be the shortest cycle cover on $(V_l, V_l \times V_l)$. Let $V_r = \mathcal{V}_m \cup \mathcal{V}_r$ and define similarly the graph $G_r$ and the shortest cycle cover $C_r$.

Let $V$ be the disjoint union $V_l \sqcup V_r$, $E := E_l \sqcup E_m$ and $G := G_l \sqcup G_m$. Thus, each left or right node occurs in $V$ once, while each middle node occurs twice. Add each weak link to $E$ as an edge from the last node of the corresponding culprit/right extension in $G_r$ to the first node of the corresponding culprit/left extension in $G_l$. Denote the resulting set of edges by $E'$. Note, that
\begin{gather}\label{eq:wE}
	\w(E') = \w(E_l) + \w(E_r) + \w(WL),
\end{gather}
where $WL$ is a set of weak links.

Consider $C = C_l \sqcup C_r$, a cycle cover on $(V, V \times V)$. Its weight equals
\begin{gather}\label{eq:wC}
	\w(C) = \w(C_l) + \w(C_r).
\end{gather}
Each edge of $C$ connects two $V_l$ nodes or two $V_r$ nodes, so all edges of $C$ satisfy the assumptions of the following lemma.
\begin{lemma}[Lemma 16]\label{lem:dom}
	Let $C'$ be any cycle cover on $V$. Let $(u, v)$ be an edge of $C' \setminus E'$ not in $V_r \times V_l$. Then $(u, v)$ is dominated by either
	\begin{enumerate}
		\item an adjacent $E'$ edge,
		\item a bad back edge of the culprit with which it shares the head $v$ and $v \in V_r$, or
		\item a bad back edge of the culprit with which it shares the tail $u$ and $u \in V_l$.
	\end{enumerate}
\end{lemma}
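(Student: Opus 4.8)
**The plan is to prove Lemma 16 by a case analysis on which of the three "escape clauses" fails, driving each failure to a contradiction with the definition of PATH.**

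Let me first understand what I'm proving and set up the structure.

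Setup: We have edge $(u,v)$ in some cycle cover $C'$ on $V$, with $(u,v) \notin E'$ and $(u,v) \notin V_r \times V_l$. I want to show $(u,v)$ is dominated by one of: (1) an adjacent $E'$ edge, (2) a bad back edge sharing head $v$ with $v \in V_r$, or (3) a bad back edge sharing tail $u$ with $u \in V_l$.

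Since $(u,v)$ is an edge of a cycle cover on $V$, and each middle node appears twice in $V$ (once as a $V_l$-copy, once as a $V_r$-copy), I need to be careful about which "copy" of $u$ and $v$ we mean. The constraint $(u,v) \notin V_r \times V_l$ rules out exactly the case where $u$ is a right-side copy and $v$ is a left-side copy, which is the direction weak links go.

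Let me think about the proof approach.
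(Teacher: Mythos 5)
Your proposal never gets past the setup. You restate the lemma, correctly note that middle nodes occur as two copies in $V = V_l \sqcup V_r$ and that weak links run in the $V_r \times V_l$ direction, announce a plan (``case analysis on which of the three escape clauses fails, driving each failure to a contradiction''), and then stop at ``Let me think about the proof approach.'' No argument is ever given, so nothing establishes the conclusion: the entire proof is missing, not merely a step of it.

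Concretely, what is needed (and what the paper does) is a case analysis not on which clause of the conclusion fails, but on \emph{why} PATH did not select the edge underlying $(u,v)$: since $(u,v) \notin E'$, that edge was rejected either by rule \textbf{R2} (it is a bad back edge) or by rule \textbf{R1} (it is dominated by an edge chosen by PATH). In the first case one invokes Observation~\ref{obs:14}: the head of a bad back edge is a left node or the first node of a culprit, and its tail is a right node or the last node of a culprit. If $v$ is the first node of a culprit, the culprit's innermost bad back edge lies inside $(u,v)$, hence was processed earlier in the dominance respecting order, hence dominates $(u,v)$ --- this is case (2); symmetrically for $u$ and case (3); and if neither holds then $v \in V_l$ and $u \in V_r$, contradicting the hypothesis $(u,v) \notin V_r \times V_l$. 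In the second case, the dominating chosen edge either has a copy in $E'$ adjacent to $(u,v)$ (case (1)), or it does not, in which case $(u,v)$ shares its head or tail with a culprit's bad back edge and case (2) or (3) applies. None of these ingredients --- the \textbf{R1}/\textbf{R2} dichotomy, Observation~\ref{obs:14}, or the nesting/processing-order argument that converts ``processed earlier'' into ``dominates'' --- appears in your text, and the plan you do state (contradiction on the failure of all three clauses) is too vague to substitute for them, so the proposal cannot be credited as a proof.
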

\begin{proof}
	Suppose first that $(u, v)$ corresponds to a bad back edge.
	By Observation~\ref{obs:14}, $v$ corresponds to a left node or to the first node of a culprit.
	In the latter case, $(u, v)$ is dominated by the back edge of the culprit: since it is inside $(u, v)$, it was processed before in the dominance respecting order.
	Therefore, either $v$ is the first node of a culprit (and case (2) holds), or else $v \in V_l$.
	Similarly, either $u$ is the last node of a culprit (and case (3) holds), or else $u \in V_r$.
	Since $(u, v)$ is not in $V_r \times V_l$, it follows that case (2) or case (3) holds.
	
	Suppose that $(u, v)$ does not correspond to a bad back edge.
	Then it must be dominated by some edge chosen by PATH.
	If this edge is in $E'$ then case (1) holds.
	If it is not in $E'$, then $(u, v)$ shares the head or tail with the bad back edge of some culprit, so (2) or (3) holds.
\end{proof}

Although Lemma~\ref{lem:dom} ensures that each edge of $C$ is dominated either by edge of $E'$ or by bad back edge of culprit, it may be that some edges of $E'$ dominate both of their adjacent edges of $C$. The following lemma shows that we can modify $C$ into a new cycle cover $C'$ with at least the same weight, so that each edge of $E'$ dominates no more than one of its adjacent edges of $C'$.
\begin{lemma}[Lemma 17]\label{lem:dom2}
	Let $C$ be any cycle cover on $V$ such that $C \setminus E'$ does not contain edges of $V_r \times V_l$. Then there is a cycle cover $C'$ such that
	\begin{enumerate}
		\item $C' \setminus E'$ has also no edges from $V_r \times V_l$,
		\item $\w(C') \geq \w(C)$,
		\item each edge in $E' \setminus C'$ dominates no more than one of its two adjacent $C'$ edges.
	\end{enumerate}
\end{lemma}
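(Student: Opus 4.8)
The plan is an iterative local-swap argument, directly analogous to the proof of Lemma~17 in~\cite{BJLTY1991}, but driven by domination together with the Monge inequality \textbf{P3} rather than by ``shortest edge'' reasoning. Call an edge $(u,v) \in E' \setminus C$ a \emph{violator} if it dominates both of its adjacent $C$-edges, namely the unique $C$-edge $(u,x)$ leaving $u$ and the unique $C$-edge $(y,v)$ entering $v$ (these exist since $C$ is a cycle cover). If no violator exists we are done with $C' = C$. Otherwise I pick any violator $(u,v)$ and perform the $2$-swap that deletes $(u,x),(y,v)$ and inserts $(u,v),(y,x)$. Because $(u,v) \notin C$ we have $x \neq v$ and $y \neq u$, so the swap preserves all in- and out-degrees and again produces a cycle cover.

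First I record the degree facts that make everything work. The edge $(u,v)$ is the $E'$-out-edge of $u$ and the $E'$-in-edge of $v$; since $E'$ has in- and out-degree at most one at each node and $(u,x) \neq (u,v)$, $(y,v) \neq (u,v)$, it follows that $(u,x) \notin E'$ and $(y,v) \notin E'$. This yields termination: each swap removes two non-$E'$ edges and inserts $(u,v) \in E'$, so it strictly increases $|C \cap E'|$, a quantity bounded by $|E'|$; hence only finitely many swaps occur. Weight monotonicity (property~2) is the Monge step: domination gives $\w(u,v) \ge \max\{\w(u,x),\w(y,v)\}$, so \textbf{P3} yields $\w(u,v)+\w(y,x) \ge \w(u,x)+\w(y,v)$, whence the swap does not decrease $\w$.

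The delicate point, and the step I expect to be the main obstacle, is preserving property~(1): I must show the newly inserted diagonal edge $(y,x)$ is never a spurious $V_r \times V_l$ edge. Suppose toward a contradiction that $y \in V_r$ and $x \in V_l$. Since $(y,v) \in C \setminus E'$ and the hypothesis forbids $V_r \times V_l$ edges there, $y \in V_r$ forces $v \in V_r$; symmetrically $(u,x) \in C \setminus E'$ with $x \in V_l$ forces $u \in V_l$. Then $(u,v) \in V_l \times V_r$. But every edge of $E'$ lies in $V_l \times V_l$ (the edges $E_l$), in $V_r \times V_r$ (the edges $E_r$), or in $V_r \times V_l$ (the weak links), and none lie in $V_l \times V_r$. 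This contradicts $(u,v) \in E'$, so $(y,x) \notin V_r \times V_l$ and property~(1) is maintained by the swap.

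Running the swaps to termination yields the desired $C'$: properties~(1) and~(2) hold throughout by the two arguments above, and at termination no violator remains, which is exactly property~(3). The only routine checks I omit are the degenerate coincidences (such as $x = y$, or $u = v$ producing a self-loop), each of which still leaves a valid cycle cover of at least the same weight and is handled identically.
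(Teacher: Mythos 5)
Your proof is correct and follows essentially the same route as the paper's: the same Monge-based 2-swap at a violating edge, the same membership chain through $V_l$/$V_r$ to show the inserted diagonal edge avoids $V_r \times V_l$, and termination by strict increase of $|C \cap E'|$. Your explicit verification that the two removed edges lie outside $E'$ (via the degree bound on $E'$) is a detail the paper leaves implicit, but the argument is the same.
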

\begin{proof}
	Since $C$ already has the first two properties, it is sufficient to argue that if $C$ violates property (3), then we can construct another cycle cover $C'$ that satisfies properties (1) and (2), and has more edges in common with $E'$.
	
	Let $(u, v)$ be an edge from $E' \setminus C$ that dominates both adjacent $C$ edges $(u', v)$ and $(u, v')$. By Monge inequality, replacing edges $(u', v)$ and $(u, v')$ with $(u, v)$ and $(u', v')$ produces a cover $C'$ with at least as much weight. To see that the new edge $(u', v')$ is not in $V_r \times V_l$, observe that if $u' \in V_r$ then $v \in V_r$ ($C \setminus E'$ has no edges in $V_r \times V_l$), which implies that $u \in V_r$ ($E'$ has no edges in $V_l \times V_r$ by construction), which implies that also $v' \in V_r$ because of $(u, v') \in C \setminus E'$.
\end{proof}

By Lemmas~\ref{lem:dom} and~\ref{lem:dom2}, we can construct from the cycle cover $C$ another cycle cover $C'$ with at least as large weight, and such that each edge of $C'$ is dominated by the edge of $E'$ or by the bad back edge of the culprit. The edges of $E'$ do not dominate more than one edge of $C'$ and the bad back edges of the culprits do not dominate more than two. Thus, by~\eqref{eq:wC},
\begin{gather}\label{eq:C'}
	\w(C_l) + \w(C_r) = \w(C) \leq \w(C') \leq \w(E') + 2\w(BC),
\end{gather}
where $BC$ is a set of bad back edges of culprits.

Let $C_m$ be a cycle cover on $\mathcal{V}_m$, where each cycle is a culprit path $P_i$ closed by the corresponding bad back edge.
\begin{lemma}[Lemma 15]\label{lem:maxweight}
	$C_m$ is the shortest cycle cover on $\mathcal{V}_m$.
\end{lemma}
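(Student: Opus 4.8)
The plan is to mimic the proof of Lemma~\ref{lem:cyc}. Since on $\mathcal{G}$ the shortest cycle cover coincides with the maximum weight one, it suffices to show that $C_m$ is a maximum weight cycle cover on the induced subgraph $(\mathcal{V}_m, \mathcal{V}_m \times \mathcal{V}_m)$. Among all maximum weight cycle covers on $\mathcal{V}_m$ I would pick one, $C'$, sharing the largest number of edges with $C_m$, and argue $C' = C_m$. Assuming $C' \neq C_m$, let $(u, v')$ be a maximum weight edge of $C' \setminus C_m$. The goal is to exhibit an edge of $C_m$ dominating $(u, v')$ and then run the usual Monge exchange: if $(u,v)$ is the dominating $C_m$-edge (say sharing the tail $u$), then since $(u,v) \notin C'$ there is an edge $(x, v) \in C'$, and because $\w(u,v) \ge \w(u,v') \ge \w(x, v)$, property \textbf{P3} lets me replace $(u,v')$ and $(x, v)$ by $(u,v)$ and $(x, v')$, obtaining a cover of at least the same weight with one more edge in common with $C_m$ --- contradicting the choice of $C'$.

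The heart of the argument is therefore to show that $(u, v')$ is dominated by one of the two $C_m$-edges adjacent to it: the $C_m$-edge leaving $u$ and the $C_m$-edge entering $v'$. Each of these is of one of two kinds --- a culprit edge actually chosen by PATH, or an innermost bad back edge closing a culprit. I would analyze why PATH did not place $(u,v')$ into $C_m$. Since $u, v' \in \mathcal{V}_m$ and $(u, v') \notin C_m$, when PATH processed $(u, v')$ it was rejected, by \textbf{R1} or \textbf{R2}. If it was rejected by \textbf{R1}, it is dominated by an already chosen edge $e$ sharing its head or tail; tracing $e$ to the unique PATH-edge leaving $u$ or entering $v'$ and comparing it with the corresponding $C_m$-edge should yield that this $C_m$-edge has weight at least $\w(u,v')$, which is exactly the domination needed. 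When the relevant $C_m$-edge is a chosen culprit edge this is immediate; when it is an innermost bad back edge I use that a bad back edge is, by definition of \textbf{R2}, not dominated when processed, hence at least as heavy as any competing chosen edge sharing the same endpoint.

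The case I expect to be the main obstacle is rejection by \textbf{R2}: then $(u, v')$ is itself a bad back edge, and by Observation~\ref{obs:14} (together with $u, v' \in \mathcal{V}_m$) it runs from the last node $u$ of one culprit to the first node $v'$ of an earlier culprit, i.e.\ it is a \emph{non-innermost} bad back edge nested around at least one culprit. Here both adjacent $C_m$-edges may themselves be innermost bad back edges, and it is no longer automatic that either dominates $(u,v')$. This is precisely the phenomenon absent from Lemma~\ref{lem:cyc}, where no edge is ever rejected for forming a cycle. To handle it I would invoke the nesting structure guaranteed by Observation~\ref{obs:earlier}: every edge internal to the span of $(u,v')$ --- in particular the weak links separating the culprits it encloses --- is processed before $(u,v')$, and weak links are the last edges included between successive culprits. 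Combining the processing order forced on these nested bad back edges and weak links should force the required weight inequality between $(u,v')$ and an adjacent innermost bad back edge, ruling this case out and completing the exchange. Verifying this weight inequality carefully is the delicate point of the whole proof.
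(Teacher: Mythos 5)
Your overall route differs from the paper's: the paper simply notes that the algorithm CYC, run on $\mathcal{V}_m \times \mathcal{V}_m$ in the dominance respecting order induced by PATH, outputs exactly $C_m$, and then invokes Lemma~\ref{lem:cyc} as a black box; you instead re-run the exchange argument of Lemma~\ref{lem:cyc} directly on $C_m$. That plan is legitimate and completable, but as written it has two genuine gaps. The first is a false dichotomy: it is \emph{not} true that every edge of $\mathcal{V}_m \times \mathcal{V}_m$ outside $C_m$ was rejected by PATH via \textbf{R1} or \textbf{R2}. A weak link can join the last node of one culprit directly to the first node of the next culprit (this happens whenever the right extension of the first block and the left extension of the second block are both empty); such an edge has both endpoints in $\mathcal{V}_m$, was \emph{chosen} by PATH, and yet is not in $C_m$. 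Your maximum-weight cover $C'$ may well contain this edge, and your case analysis then produces no dominating $C_m$-edge for it, so the exchange step cannot be executed.

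The second gap is that the step you yourself call the heart of the proof, the \textbf{R2} case, is left at the level of ``should force the required weight inequality.'' Both gaps close with one fact you never state: in a dominance respecting order, of two edges sharing a tail or sharing a head, the one processed \emph{earlier} has weight at least that of the later one (otherwise the later would strictly dominate the earlier and would have to come first). Now let $(u,v')$ be a non-innermost bad back edge with $u, v' \in \mathcal{V}_m$, spanning $[i,j]$; by Observation~\ref{obs:14}, $v' = v_i$ is the first node of a culprit $A$ with bad back edge $b_A = (v_{j_A}, v_i)$ and $j_A < j$. The path edge $(v_{j_A}, v_{j_A+1})$ is internal to $[i,j]$, hence processed before $(u,v')$ by Observation~\ref{obs:earlier} applied to $(u,v')$, and processed after $b_A$ by Observation~\ref{obs:earlier} applied to $b_A$. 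So $b_A$ precedes $(u,v')$, shares its head $v'$, and therefore dominates it by the ordering fact: the case you feared must be ``ruled out'' in fact resolves affirmatively. The same chain (innermost back edge precedes the path edge leaving/entering its culprit endpoint, which precedes the edge in question) also repairs the weak-link gap above and your \textbf{R1} subcase where $u$ is the last node of its culprit; note that your stated justification there (``a bad back edge is not dominated when processed'') only compares $b_A$ against edges processed \emph{before} it, whereas the edges you need to beat are processed \emph{after} it, so it is this ordering fact, not \textbf{R2}, that does the work. In short: right strategy, but the two dominations on which the whole exchange rests are asserted rather than proved --- the same verification the paper compresses into its ``clearly.''
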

\begin{proof}
	Recall the algorithm CYC that goes through a list of all edges in $\mathcal{V}_m \times \mathcal{V}_m$ in a dominance respecting order, induced by the order of PATH in $\mathcal{V}$, and does not include another edge if and only if it is dominated by an already chosen edge. Clearly, it constructs $C_m$.	
	Therefore, by Lemma~\ref{lem:cyc}, $C_m$ is the shortest cycle cover.
\end{proof}

Note that $\|\mathrm{PATH}(\mathcal{G})\| = \| E' \| - \sum_i \| P_i \|$ and $\sum_i \| P_i \| = \| C_m \| + \w(BC)$. Then 
\begin{align}
	\nonumber
	\|\mathrm{PATH}(\mathcal{G})\| & = \| E' \| - \sum_i \| P_i \| \\
	\nonumber
	& = \sum_{v \in V_r} |v| + \sum_{v \in V_l} |v| - \w(E') - \sum_i \| P_i \| \\
	\nonumber
	& \overset{\eqref{eq:C'}}{\leq} \sum_{v \in V_r} |v| + \sum_{v \in V_l} |v| - \w(C_l) - \w(C_r) + 2 \w(BC) - \sum_i \| P_i \| \\
	\nonumber
	& = \| C_l \| + \| C_r \| + 2 \w(BC) - \sum_i \| P_i \| \\
	\label{eq:main1}
	& \leq 2\cdot\|SHP\| + \w(BC) - \| C_m \|.
\end{align}

\begin{lemma}[Theorem 8]
	\begin{gather}\label{eq:main2}
		\w(BC) - 2\|C_m\| \leq \|SHP\|.
	\end{gather}
\end{lemma}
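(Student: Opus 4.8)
The plan is to adapt the proof of Theorem~8 of~\cite{BJLTY1991}: project the shortest Hamiltonian path $SHP$ onto a single representative node per culprit, and control the overlaps of consecutive representatives with property \textbf{P4}. Concretely, for each culprit $P_i$ let $r_i$ be its first node, i.e.\ the head of the bad back edge $b_i = (v_{j_i}, v_i)$, so $r_i = v_i$. Since the culprits are pairwise disjoint, the $r_i$ are $k$ distinct nodes of $\mathcal{V}_m$, where $k$ is the number of culprits (equivalently, the number of cycles of $C_m$), and $SHP$ visits each of them exactly once.

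The first ingredient is that the representatives already account for the entire weight of the bad back edges: by \textbf{P1} we have $\w(b_i) = \w(v_{j_i}, v_i) \le |v_i| = |r_i|$, hence $\sum_i |r_i| \ge \w(BC)$. The second ingredient bounds $\|SHP\|$ from below by the length of the path visiting only the representatives in the order $r_{\sigma(1)}, \dots, r_{\sigma(k)}$ in which they occur along $SHP$. I would obtain this by deleting the non-representative nodes of $SHP$ one at a time: deleting an interior node in $u \to v \to w$ replaces the contribution $|v| - \w(u,v) - \w(v,w)$ by $-\w(u,w)$, which does not increase the length by \textbf{P2}; deleting a terminal node $v$ with neighbour $u$ removes a nonnegative quantity ($|v| - \w(u,v)$ or $|v| - \w(v,u)$) by \textbf{P1}. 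Thus
\[ \|SHP\| \ge \Bigl\| r_{\sigma(1)} \to \dots \to r_{\sigma(k)} \Bigr\| = \sum_i |r_i| - \sum_{m=1}^{k-1} \w\bigl(r_{\sigma(m)}, r_{\sigma(m+1)}\bigr). \]

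The third ingredient controls the cross-overlaps. Consecutive representatives $r_{\sigma(m)}$ and $r_{\sigma(m+1)}$ belong to distinct culprits, hence to distinct cycles $C_{\sigma(m)}, C_{\sigma(m+1)}$ of $C_m$, which by Lemma~\ref{lem:maxweight} is the shortest cycle cover on $\mathcal{V}_m$. Applying \textbf{P4} to these two cycles yields $\w(r_{\sigma(m)}, r_{\sigma(m+1)}) \le \|C_{\sigma(m)}\| + \|C_{\sigma(m+1)}\|$. Summing over $m$, and using that every $\|C_i\|$ is nonnegative and appears at most twice, the total is at most $2\sum_i \|C_i\| = 2\|C_m\|$. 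Combining the three ingredients gives
\[ \|SHP\| \ge \sum_i |r_i| - 2\|C_m\| \ge \w(BC) - 2\|C_m\|, \]
which is exactly~\eqref{eq:main2}.

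The step needing the most care is the invocation of \textbf{P4} in the previous paragraph: \textbf{P4} is stated for the maximum-weight cycle cover of the \emph{whole} graph, whereas here I apply it to $C_m$, the shortest cycle cover of the induced subgraph $G_m = (\mathcal{V}_m, \mathcal{V}_m \times \mathcal{V}_m)$. This is legitimate because restricting a pseudo-overlap graph to a vertex subset again yields a pseudo-overlap graph: \textbf{P1}--\textbf{P3} are universally quantified and survive restriction verbatim, while for the concrete graphs of interest a sub-collection of strings is itself a valid instance, so Lemma~\ref{lem:bound} (respectively Lemma~\ref{lem:ubound}) supplies \textbf{P4} for $G_m$ directly. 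A second, more routine subtlety is the endpoint case of the projection argument, where I fall back on \textbf{P1} instead of \textbf{P2}; one should also dispose of the degenerate cases $k \le 1$, in which~\eqref{eq:main2} follows from $\|SHP\| \ge |r_1| \ge \w(BC)$ by the same deletion argument together with \textbf{P1}.
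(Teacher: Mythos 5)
Your proof is correct and follows essentially the same route as the paper's: one representative node per culprit cycle, \textbf{P1} to bound $\w(BC)$ by the representatives' node weights, \textbf{P4} applied to the cycles of $C_m$ to bound the cross-edges by $2\|C_m\|$, and a \textbf{P2}/\textbf{P1}-based restriction argument to compare against $\|SHP\|$. The differences are cosmetic---the paper takes the maximum-weight node of each cycle and compares the shortest cycle through the representatives with the shortest Hamiltonian cycle on $\mathcal{V}$, whereas you take the head of each bad back edge and project $SHP$ directly onto a path of representatives---and your explicit justification for applying \textbf{P4} to the shortest cycle cover of the induced subgraph on $\mathcal{V}_m$ addresses a subtlety that the paper's own proof passes over silently.
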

\begin{proof}
	Suppose $C_m = C_1 \cup \dots \cup C_d$ and let $c_i \in C_i$ be a node with the maximum weight $|c_i|$.
	By \textbf{P1}, the weight of the bad back edge of $C_i$ is not greater than $|c_i|$.
	By \textbf{P4}, the shortest cycle $SC$ that goes through $\{ c_1, \dots, c_d \}$ has weight at most $2\sum_i \| C_i \| = 2 \| C_m \|$.
	Moreover, $SC$ is not longer than the shortest Hamiltonian cycle on $\mathcal{V}$, which in turn is not longer than $SHP$. Therefore,
	\[ \w(BC) - 2\|C_m\| \leq \sum_i |c_i| - 2\|C_m\| \leq \|SC\| \leq \| SHP \|. \]
\end{proof}

Combining~\eqref{eq:main1} and~\eqref{eq:main2}, we finally obtain
\[ \|\mathrm{PATH}(\mathcal{G})\| \leq 2\cdot\|SHP\| + \w(BC) - \| C_m \| \leq 3\cdot\|SHP\| + \| C_m \| \leq 4\cdot\|SHP\|.\]

\bibliographystyle{splncs04}
\bibliography{main}
	
\end{document}